\documentclass[a4paper]{article} 

\usepackage{amsmath}
\usepackage{amsthm}
\usepackage{breakcites}
\usepackage{nicefrac}
\usepackage{multicol}
\usepackage{multirow}
\usepackage{thmtools} 
\usepackage{thm-restate}
\usepackage[hidelinks]{hyperref}
\usepackage{booktabs}
\usepackage{caption}
\usepackage{subcaption}

\usepackage{tikz,pgf}
\usetikzlibrary{shapes,backgrounds,shapes.geometric,shadows,patterns,positioning,arrows}
\tikzstyle{arg}=[draw,circle,fill=gray!15,inner sep=1pt,minimum size=.5cm]
\tikzstyle{attack}=[->,left,thick,>=stealth]
\tikzstyle{cond} = [rectangle, inner sep=5pt]


\newcommand{\ccfont}[1]{\protect\mathsf{#1}}
\newcommand{\Polytime}{\ccfont{P}}
\newcommand{\NP}{\ccfont{NP}}

\newcommand{\ThetaP}[1]{\Theta_{#1}\Polytime}

\newcommand{\DeltaParallel}[1]{\Polytime_{\parallel[{#1}]}^\NP}

\newtheorem{example}{Example}
\newtheorem{theorem}{Theorem}

\newtheorem{definition}{Definition}
\newtheorem{observation}[theorem]{Observation}

\newcommand{\af}{F}
\newcommand{\absaf}{\mathcal{S}}
\newcommand{\args}{\mathit{Arg}}
\newcommand{\att}{\mathit{Att}}
\newcommand{\voters}{N}
\newcommand{\ballots}{{\bar A}}
\newcommand{\ballot}[1]{A(#1)}

\newcommand{\cf}{\mathit{cf}}

\newcommand{\adm}{\mathit{adm}}
\newcommand{\pref}{\mathit{prf}}

\newcommand{\outcome}{\Omega}
\newcommand{\pov}{\pi}
\newcommand{\maxCov}{\text{MaxCov}}
\newcommand{\weight}{w}
\newcommand{\OWA}{\text{OWA}_{\vec{\weight}}}
\newcommand{\greedOWA}{\text{GreedOWA}_{\vec{\weight}}}
\newcommand{\greedCov}{\text{GreedCov}}
\newcommand{\rep}{\mathit{rep}}
\newcommand{\repDef}{\rep^c}
\newcommand{\maxDef}[1]{\mu(#1)}

\DeclareMathOperator*{\argmax}{arg\,max}

\hyphenation{ABSAF}
\hyphenation{PR}
\hyphenation{FPTP}
\hyphenation{STV}
\hyphenation{MMP}


\theoremstyle{plain}

\usepackage{natbib}
\setcitestyle{numbers,square}

\usepackage{kpfonts}
\usepackage[T1]{fontenc}
\usepackage[hidelinks]{hyperref}
\usepackage[nocompress]{cite}

\title{Combining Voting and Abstract Argumentation to Understand Online Discussions}
\date{\small DBAI, TU Wien\\\texttt{\{firstname.lastname\}@tuwien.ac.at}}

\author{
  Michael Bernreiter
  \hspace{3cm}
  Jan Maly
  \\[3ex]
  Oliviero Nardi
  \hfill
  Stefan Woltran
}
         
\newcommand{\BibTeX}{\rm B\kern-.05em{\sc i\kern-.025em b}\kern-.08em\TeX}

\begin{document}
\sloppy  

\maketitle 

\begin{abstract}
	Online discussion platforms are a vital part of the public discourse
	in a deliberative democracy. However, how to interpret
	the outcomes of the discussions on these platforms is often unclear.
	In this paper, we propose a novel and explainable method
	for selecting a set of most representative, consistent points of view
	by combining methods from computational social choice and abstract argumentation.
	Specifically, we model online discussions as abstract argumentation frameworks
	combined with information regarding which arguments voters approve of.
	Based on ideas from approval-based multiwinner voting,
	we introduce several voting rules for selecting a set of preferred extensions
	that represents voters' points of view.
	We compare the proposed methods across several dimensions, theoretically
	and in numerical simulations, and give clear suggestions on which methods to
	use depending on the specific situation.
\end{abstract}


\section{Introduction}

In recent years, a veritable ``deliberative wave'' has swept through
many democratic societies \citep{OECD}, bringing with it many new discursive 
participation formats, from citizen assemblies to online discussion platforms. 
These formats allow citizens to discuss (often very divisive) political issues
and thus can enable us to understand which opinions are held by well-informed citizens.
In this paper, we focus on text-based online discussion platforms used to inform political decision making, such as \emph{Polis} (\url{pol.is}),
\emph{Your Priorities} (\url{yrpri.org}) or \emph{Decidim} (\url{decidim.org})
to name just a few of the rapidly expanding number of tools used around the world \citep{SDI}.
These platforms allow citizens to submit comments, but also to approve (and sometime
disapprove) comments others have posted. While such platforms are easy to set up and use,
they have the downside that discussions can be unorderly and chaotic, which makes it hard
to interpret and summarize them. One approach for solving this problem -- used for example
by \emph{Polis} \citep{small2023opportunities} -- is to use machine learning
and statistics to find clusters of voters 
with similar opinions, as well as a set of comments representing their joint opinion.
However, with this method the comments used to represent a cluster of 
voters might be inconsistent. Additionally, the obtained results are generally
not explainable, which is highly problematic for processes that inform political decision making.

In this paper, we propose an explainable method for picking consistent points of view
that represent the opinions of voters using methods from computational social choice (ComSoC)
combined with tools from abstract argumentation~\citep{Dung1995}. Picking a representative set of
most popular comments based on citizens' approvals is an aggregation problem
that is closely related to the well-studied setting of
approval-based multiwinner voting \citep{abcbook},
for which many voting rules guaranteeing fair representation have been introduced lately \citep{aziz2017justified,peters2020proportionality}.
However, as every comment is only 
one argument in a larger discussion, we would argue that
it does not suffice to pick a single representative set
of popular comments, but that we have to find \emph{sets} of comments that represent
citizens' points of view.

Moreover, these points of view should be \emph{consistent} if we want them to
influence political decision making. To guarantee this, we need additional semantic
information about the comments. Depending on the structure of the discussion, different 
formalisms could be used to capture this information. We focus on 
abstract argumentation frameworks (AFs), one of the most well-studied formalisms
for representing conflicting arguments.
AFs are particularly well-suited for our setting due to their
minimal and easy to understand syntax that only requires the specification of attacks
between different comments or arguments.
These attack relations
either have to be added by a moderator,
crowd-sourced from the participants, or mined from natural language text using argument mining techniques ~\citep{PalauM09a,CabrioV12,GoffredoCVHS23,BaumannWHHH20}.

Our contributions are as follows: 
we formally introduce \emph{Approval-Based Social Argumentation Frameworks} to model online discussions in which citizens can 
approve arguments.
We study the problem of selecting a small but representative set of so-called preferred extensions,
which are maximal consistent and defendable sets of arguments. 
We study this problem from two sides.
First, in smaller discussions, we might want to pick an, ideally small, set 
of extensions that represents every voter perfectly. Whether this is possible depends, as we shall show,
on how we conceptualize perfect representation.
Secondly, for large discussions, we might be more interested in 
picking a small number of extensions that represent the voters \emph{as well as possible}.
We propose several 
methods for doing so, based on well-known voting rules
from the ComSoC literature, and compare them with respect to their computational complexity,
their axiomatic properties and their performance in numerical simulations. 
Based on these results, we make precise recommendations for which methods are best suited for
different applications. 

The code \citep{ourcode} used in our experiments is available online.



\subsection*{Related Work}

Social choice theory has been used before to analyze discussions in participation platforms (like \emph{Polis}), e.g., by \citet{halpern2023representation} 
and \citet{fish2023generative} 
However, neither of these works employs argumentation to capture the relationship between comments, nor are they concerned with choosing a consistent subset of them. Hence, they are technically quite  distinct from ours.
Indeed, to the best of our knowledge, our paper is the first to consider voter representation in abstract argumentation. 

That said, there is already a significant amount of literature on combining argumentation and social choice \citep{baumeister2021collective}. One strain of literature \citep{caminada2011judgment,AwadEtAl2017} 
applies ideas from judgment aggregation \citep{EndrissHBCOMSOC2016} to argumentation. Here, the objective is to aggregate individual judgments about the acceptability of arguments. Crucially, these works assume that voters submit ballots adhering to strict rationality constraints, which is not realistic in online discussions, where voters are often not even aware of all arguments at the time of voting.

Another related approach is that of \emph{weighted} AFs~\citep{dunne2011weighted,BistarelliSantini2021}, where attacks between arguments have weights that can be established by letting agents vote on arguments, similar to our scenario. Similarly, in \emph{social} AFs \citep{LeiteMartins2011}, voters can approve (or disapprove) arguments. The strength of an argument is then related to its social support and the support of its attacking arguments. Beyond abstract argumentation, in the \emph{relational reasoning model}~\citep{ganzer2023model} voters can judge the acceptability of a set of arguments and their relationships (by assigning them weights). 
The goal here is to aggregate such judgments and collectively evaluate a set of target arguments.
While there are conceptual similarities to our approach, all three formalisms redefine argumentation semantics in light of votes (e.g.\ by considering the strength of arguments or attacks). In contrast, we rely on standard semantics for AFs
and use the additional approval information to select a representative subset of extensions.
Most importantly, none of the papers mentioned above study the problem of voter representation, which is the main objective of our paper.

Finally, there exists a significant body of work focused on the merging of AFs \citep{COSTEMARQUIS2007730,tohme2008aggregation,dunne2012argument,delobelle2016merging,ChenEndrissAIJ2019}. Here, each agent is endowed with a framework, and the goal is to merge these into a collective one. This differs significantly from our approach, where the AF is given and we have to decide which extension to select.

\section{Preliminaries}


The basic problem considered in this paper,
namely selecting representative comments based on the approvals of 
voters, is an aggregation problem with the following components:
a finite set of voters $\voters$, a finite set of candidates $C$, and a vector
of approval ballots $\ballots=\left(\ballot{i}\right)_{i\in \voters}$,
where $\ballot{i} \subseteq C$. Note that this is equivalent to 
the input of an approval-based single- or multiwinner election. 
The difference to these formalisms lies in the outcome $W$ we try to select:
we want to select a set $W$ of subsets of $C$, i.e., $W \subseteq 2^C$
where $2^C$ denotes the power
set of $C$. Often, we will study the problem of selecting exactly $k$ subsets,
i.e., we will stipulate $|W| = k$. However, in contrast 
to most multiwinner voting settings, we will not constrain the cardinality of the 
selected subsets in $W$. Instead, we impose consistency constraints using abstract 
argumentation.

\subsubsection*{Argumentation}

AFs~\citep{Dung1995} are a well-studied formalism in which discussions can be represented and reasoned about. 
Arguments (denoted by lower-case letters $a,b,c,\ldots$) in AFs are abstract entities, i.e., we are not concerned with their internal structure but rather with the relationship between them. 
Specifically, an argument $x$ can attack another argument $y$, which implies that $x$ and $y$ are in conflict and cannot be jointly accepted. 
To accept $y$, it must be defended against $x$'s attack, i.e.,
either $y$ attacks $x$ or there is another argument $z$ which attacks $x$ and can be accepted alongside $y$.

\begin{definition} \label{def:AF}
  An argumentation framework (AF) is a tuple $\af = (\args,\att)$ where $\args$ is a finite set of arguments and $\att \subseteq \args \times \args$ is an attack relation between arguments. 
  Let $S \subseteq \args$. $S$ \emph{attacks} $b$ (in~$\af$) if $(a,b)\in \att$ for some $a\in S$; 
  $S^+_F=\{b\in \args \mid\exists a\in S: (a,b)\in \att\}$ denotes the set of arguments attacked by $S$.
  An argument $a \in \args$ is \emph{defended} (in $\af$) by $S$ if $b\in S^+_F$ for each $b$ with $(b,a)\in \att$.
\end{definition}

AFs are usually visualized as directed graphs, where a node is an argument and an edge an attack between arguments (see Figure~\ref{fig:example-absaf}). 

AF-semantics are 
functions $\sigma$ that assign 
a set
$\sigma(\af)\subseteq 2^{\args}$
of extensions
to each AF
$\af=(\args,\att)$.
Conflict-free ($\sigma = \cf$) semantics select sets $S \subseteq \args$ where no two arguments attack each other. Admissible semantics ($\sigma = \adm$) select conflict-free sets that defend themselves. 
Preferred semantics ($\sigma = \pref$) select subset-maximal admissible sets.  
Many alternative AF-semantics have been defined~\citep{BaroniEtAl2018}, but we focus on the well-established preferred semantics.

\begin{definition}\label{def:semantics}
  Let $\af = (\args,\att)$ be an AF. 
  For $S\subseteq \args$ it holds that
  \begin{itemize}
    \item $S \in \cf(\af)$ iff there are no $a, b \in S$ such that $(a,b) \in \att$;
    \item $S\in\adm(\af)$ 
    iff $S \in \cf(\af)$ and each $a\in S$ is defended by $S$; 
    \item $S\in\pref(\af)$ 
    iff $S\in\adm(\af)$ and $S \not\subset T$ for all $T\in\adm(\af)$.
  \end{itemize}
\end{definition}

\begin{example} \label{example:af}
	Let $\af$ be the AF from Figure~\ref{fig:example-absaf}. 
	Note that $\{p_1,p_2\} \in \cf(\af)$ but $\{f_1,p_2\} \not\in \cf(\af)$.
	Regarding admissible semantics, $\{p_2\} \in \adm(\af)$ since $p_2$ defends itself against the attack $(f_1,p_2)$. 
	However, $\{p_1\} \not\in \adm(\af)$ since $p_1$ does not defend itself against $(f_1,p_1)$, while 
	$\{p_1,p_2\} \in \adm(\af)$ since $p_2$ defends $p_1$. 
	There are~8 preferred extensions, namely 
	\begin{align*}
		\pref(F) = \{ 
		& \{p_1,p_2, \allowbreak p_3,s_1,s_2\},\! \allowbreak
		\{p_1,p_2, \allowbreak p_3,m_1\},\! \allowbreak
		\{f_2, p_1,p_2, \allowbreak s_1,s_2\},\! \allowbreak
		\{f_2, p_1,p_2, \allowbreak m_1\},\! \allowbreak \\
		& \{f_1,p_3, \allowbreak s_1,s_2\},\! \allowbreak
		\{f_1, \allowbreak p_3,m_1\},\! \allowbreak
		\{f_1,f_2, \allowbreak s_1,s_2\},\! \allowbreak
		\{f_1,f_2,m_1\}
		\}.
	\end{align*}
\end{example}

\subsubsection*{Complexity Theory}

We assume familiarity with complexity classes $\Polytime$ and $\NP$. 
Moreover, $\ThetaP{2}$ is the class of decision problems solvable in polynomial time with access to $O(\log n)$ $\NP$-oracle calls~\citep{Wagner1990}.
%

\begin{figure}[t]
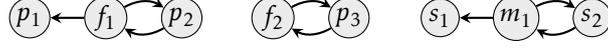

	\centering
	\tikz{
		\node[arg] (a21) at (0,0) {$p_1$};
		\node[arg] (a16) at (1,0) {$f_1$};
		\node[arg] (a42) at (2,0) {$p_2$};
		\node[arg] (a15) at (3.2,0) {$f_2$};
		\node[arg] (a43) at (4.2,0) {$p_3$};
		\node[arg] (a162) at (5.4,0) {$s_1$};
		\node[arg] (a163) at (6.4,0) {$m_1$};
		\node[arg] (a168) at (7.4,0) {$s_2$};
		\draw[attack] 
		(a16) edge (a21)
		(a16) edge[bend left] (a42)
		(a42) edge[bend left] (a16)
		(a15) edge[bend left] (a43)
		(a43) edge[bend left] (a15)
		(a163) edge (a162)
		(a163) edge[bend left] (a168)
		(a168) edge[bend left] (a163)
		;
	}
	\caption{AF for the Canadian election reform discussion.}
	\label{fig:example-absaf}
\end{figure}
\setlength{\tabcolsep}{0.4em}

\begin{figure}[t]
	\begin{minipage}{0.92\textwidth}
		\textbf{33}$\times \{p_1\}$;
		\textbf{31}$\times\{p_1,p_2,p_3\}$;  
		\textbf{16}$\times\{p_2\}$,$\{p_3\}$;
		\textbf{11}$\times\{f_2\}$;
		\textbf{10}$\times\{p_2,p_3\}$;  
		\textbf{9}$\times\{p_1,p_3\}$;  
		\textbf{9}$\times\{f_2,p_1\}$;  
		\textbf{8}$\times\{p_1,p_2\}$;  
		\textbf{7}$\times\{f_2,p_1,p_2,p_3\}$;
		\textbf{6}$\times\{s_2\}$;
		\textbf{4}$\times\{f_1,f_2\}$, $\{p_1,p_2,p_3,m_1,s_1\}$;
		\textbf{3}$\times\{f_1\}$, $\{f_2,p_2,p_3\}$;
		\textbf{2}$\times\{f_2,p_1,p_3\}$, $\{f_1,p_1\}$, $\{f_2,p_1,p_2\}$, $\{p_1,p_2,p_3,s_1\}$;\\
		\textbf{1}$\times\{p_2,m_1\}$, $\{s_1\}$, $\{m_1\}$, $\{f_2,p_3\}$, $\{p_2,p_3,s_2\}$,  $\{f_1,f_2,p_3\}$, $\{p_2,s_1\}$, $\{p_1,p_2,p_3,s_1,s_2\}$, $\{f_1,p_1,p_2,p_3\}$.
	\end{minipage}
	\caption{Approval ballots (Canadian election reform data).}
	\label{tab:ballots-canada}
\end{figure}

\section{Approval-based Social AFs} \label{sec:absaf}

Let us now introduce our main object of study, \emph{Approval-Based Social AFs} (ABSAFs),
which model discussions where agents can approve arguments that they find convincing. 

\begin{definition} \label{def:absaf}
An ABSAF $\absaf=(\af,\voters,\ballots)$ consists of an AF $\af = (\args,\att)$, a finite set $\voters$ of agents (also called voters), and
a vector $\ballots=\left(\ballot{i}\right)_{i\in \voters}$ of approval ballots where, for every agent $i \in \voters$, $\ballot{i} \subseteq \args$ with $\ballot{i} \neq \emptyset$ is the set of
arguments approved by $i$. 
\end{definition}

Throughout the paper, we denote the number of voters by $n$ and identify each voter by a natural number $i$, i.e., $\voters = \{1,\ldots,n\}$. 

We do not stipulate any constraints for the
submitted ballots -- not even conflict-freeness --
as ballots containing conflicting arguments appear in real-world examples
(see Example~\ref{example:absaf}).
The goal is to select a (usually small) set of coherent viewpoints representing the agents. 

\begin{definition} \label{def:outcome}
	An outcome $\outcome \subseteq \sigma(\af)$ of an ABSAF $\absaf=(\af,\voters,\ballots)$ is a set of $\sigma$-extensions. We call $\pov \in \outcome$ a point of view (or viewpoint).
\end{definition}

\begin{example} \label{example:absaf} 	
  We consider a discussion on an election reform proposed in Canada, taken from the ``Computational Democracy Project''.\footnote{\url{https://compdemocracy.org/}}
  For succinctness, we show two arguments in favor of the first-past-the-post (FPTP) electoral system ($f_1,f_2$), three arguments in favor of proportional representation (PR) ($p_1,p_2,p_3$), two arguments in favor of the single transferable vote (STV) system ($s_1,s_2$), and one argument in favor of a mixed-member proportional (MMP) system ($m_1$). The ``\textit{cid}'' code refers to the comment-id in the original dataset.\footnote{\url{https://github.com/compdemocracy/openData/tree/master/canadian-electoral-reform}}
  \begin{itemize}
    \item $f_1$ (\textit{cid} 16): 
    ``In systems with PR, opinions are too strongly divided and nothing gets done.''
    
    \item $f_2$ (\textit{cid} 15): 
    ``FPTP results in more stable governance.''
    
    \item $p_1$ (\textit{cid} 21): 
    ``A party’s share of seats in the House of Commons should reflect its share of the popular vote.''
    
    \item $p_2$ (\textit{cid} 42): 
    ``PR will reduce hyper-partisanship and promote cooperation between parties.''
    
    \item $p_3$ (\textit{cid} 43): 
    ``PR will reduce wild policy swings and result in more long-lasting policies.''
    
    \item $s_1$ (\textit{cid} 162): ``STV's advantage over MMP is that it doesn't explicitly enshrine political parties in our electoral system.''
    
    \item $s_2$ (\textit{cid} 168): 
    ``I like the simplicity of stating my favorite candidate, as well as alternative choices.''
    
    \item $m_1$ (\textit{cid} 163): ``MMP has a better chance of obtaining public support than STV due to its relative simplicity.''
  \end{itemize}
  We manually added attacks between arguments.
  The resulting AF $\af = (\args,\att)$ is shown in Figure~\ref{fig:example-absaf} and is the same as in Example~\ref{example:af}. 
  
  Table~\ref{tab:ballots-canada} shows the ballots along with how many agents voted for the given ballot, as extracted from the original dataset.
  Most voters approve arguments in favor of PR. Some voters approve both arguments in favor of PR and FPTP, e.g., the ballot $\{f_1,p_2\}$. Note that this ballot is conflict-free. However, not all ballots are conflict-free, e.g., $\{f_2,p_1,p_2,p_3\}$ was approved by seven voters.
  Moreover, the ballot with the most voters, namely $\{p_1\}$, is not admissible. 
\end{example}

Our main concern will be to find an outcome that is small, in that it contains
few points of view, but also represents as many agents as possible.
For this, we need to formally define what it means for an agent to be represented by a point of view or by an outcome. 

\begin{definition} \label{def:representation}
Let $\absaf=(\af,\voters,\ballots)$ be an ABSAF and let $\alpha\in[0\dots 1]$.
A point of view $\pov \in \sigma(\af)$ $\alpha$-represents voter $i \in \voters$ iff
\[\rep_i(\pov)=\frac{|\pov \cap \ballot{i}|}{|\ballot{i}|} \geq \alpha.\]
Moreover, an outcome $\outcome \subseteq \sigma(\af)$ $\alpha$-represents voter $i \in \voters$ iff
$\rep_i(\outcome)=\max_{\pov\in\outcome}\rep_i(\pov) \geq \alpha$.
\end{definition}

\begin{example}[Example~\ref{example:absaf} continued] \label{example:representation}
	Let $\ballot{i} = \{f_2,p_1,p_2\}$, and  
	consider the preferred extensions (points of view) 
		$\pov_1 = \{f_1,f_2,m_1\}$, 
		$\pov_2 = \{p_1,p_2, \allowbreak p_3,m_1\}$, and 
		$\pov_3 = \{f_2, p_1,p_2, \allowbreak m_1\}$. 
	Then 
		$\rep_i(\pov_1) = \nicefrac{1}{3}$, 
		$\rep_i(\pov_2) = \nicefrac{2}{3}$, and 
		$\rep_i(\pov_3) = 1$.
	Regarding outcomes of size~2 we have 
		$\rep_i(\{\pov_1,\pov_2\}) = \nicefrac{2}{3}$ and
		$\rep_i(\{\pov_1,\pov_3\}) = \rep_i(\{\pov_2,\pov_3\}) = 1$.
\end{example}

In deliberative democracy, we usually assume that voters are 
rational agents \citep{Cohen2007} and thus,
in principle, would arrive at a consistent and defendable point of view 
when considering all arguments for a sufficient amount of time, which we could call 
their \emph{ideal} point of view.
However, in practice, the vast majority of participants in an online discussion
will not carefully consider all comments. Hence, we do not assume that
ballots are complete or consistent.
In light of this, we interpret $\rep_i(\pi)$ as a measure of how consistent
the observed voting behavior of $i$ is with
the assumption that $\pi$ is her ideal point of view.
Due to our rationality assumption, ideal points of view should at least be admissible.
As we aim to represent voters with a small number of extensions, we can focus on preferred extensions, i.e., $\sigma = \pref$, since this gives us subset-maximal coherent viewpoints. If an admissible viewpoint $\pi$ contains all approved arguments of an agent, so does the preferred viewpoint $\pi' \supseteq \pi$.
Finally, as two different preferred extensions must by definition be conflicting,
each voter's ideal point of view can only coincide with one preferred extension.
Thus, we define $\rep_i(\outcome)$ as $\max_{\pov\in\outcome}(\rep_i(\pov))$.

\section{Perfect Representation} \label{sec:perfect-representation}

Ideally, we want to find an outcome that perfectly represents everyone, i.e., an outcome that 1-represents all voters. 
This is not always possible in practice, however. Indeed, Example~\ref{example:absaf} clearly demonstrates that real votes cannot be assumed to be conflict-free, in which case there can be no outcome that 1-represents everyone.
%
%
But even if ballots are conflict-free, it may not be possible to fully represent all voters:
consider the ABSAF $(\af,\voters,\ballots)$ from Figure~\ref{fig:example-approving-undefendable-arguments}. The labels above the arguments represent the voters approving of them, i.e., $\ballot{1} = \{a,b\}$, $\ballot{2} = \{a,c,d,e\}$.
Note that $\pref(\af) = \{\{a,b\},\{a,c\}\}$, but $\rep_2(\{a,b\}) = 0.25$, and $\rep_2(\{a,c\}) = 0.5$.


Additionally, deciding whether all voters can be perfectly represented is $\NP$-complete.
This follows from the fact that deciding credulous acceptance\footnote{An argument $x$ is credulously accepted (w.r.t.\ $\sigma$) in an AF $\af$ iff $\exists S \in \sigma(\af) \colon x \in S$.} is $\NP$-complete for preferred semantics~\citep{DvorakDunne2018}.

\begin{definition} \label{def:decision-problem-representability}
  \textsc{$1$-Representability} is the following decision problem: given an ABSAF $\absaf=(\af,\voters,\ballots)$ and $k \in \{1,\ldots,|\voters|\}$, is there $\outcome \subseteq \pref(\af)$ with $|\outcome| \leq k$ such that $\rep_i(\outcome) = 1$ for all $i \in N$?
\end{definition}

\begin{restatable}{proposition}{representationNPhardness} \label{prop:representation-np-hardness}
  \textsc{$1$-Representability} is $\NP$-complete. $\NP$-hardness holds even if there is only one voter, i.e., for $n = 1$. 
\end{restatable}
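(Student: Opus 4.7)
I would prove the two parts separately, beginning with $\NP$-hardness.

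For $\NP$-hardness even when $n=1$, my plan is to reduce directly from credulous acceptance under preferred semantics, which is $\NP$-complete~\citep{DvorakDunne2018}. Given an instance $(F,x)$ of credulous acceptance, I would form the ABSAF $\absaf=(F,\{1\},(\{x\}))$ and set $k=1$. Since $|\ballot{1}|=1$, an outcome $\outcome$ with $|\outcome|\leq 1$ satisfies $\rep_1(\outcome)=1$ iff there exists $\pi\in\pref(F)$ with $x\in\pi$, which is precisely the credulous acceptance question. This uses a single voter, giving the stronger hardness bound.

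For $\NP$ membership, the main obstacle is that the naive guess-and-check — guess $\outcome$ and verify that each element is preferred — requires witnessing the absence of an admissible superset, and is thus $\coNP$-hard on its own. To sidestep this, I would reformulate the problem as follows: the instance is positive iff there exist admissible sets $S_1,\ldots,S_k$ in $F$ such that every voter's ballot $\ballot{i}$ is contained in some $S_j$. One direction is immediate because preferred extensions are admissible. For the converse, I would invoke the standard fact that every admissible set extends to a preferred extension $\pi_j\supseteq S_j$; the outcome $\{\pi_1,\ldots,\pi_k\}$ still contains each ballot in some member, since the condition $\ballot{i}\subseteq S_j$ is preserved under taking supersets. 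This works precisely because $\rep_i(\outcome)=1$ is a monotone condition in the chosen extensions.

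With the reformulation in hand, $\NP$ membership is routine: a nondeterministic algorithm guesses the tuple $(S_1,\ldots,S_k)$, which has polynomial size; verifies conflict-freeness and self-defense of each $S_j$ in polynomial time; and checks for each voter $i\in\voters$ that $\ballot{i}\subseteq S_j$ for some $j$. Combined with the reduction above, this yields $\NP$-completeness of \textsc{$1$-Representability}, with hardness already at $n=1$.
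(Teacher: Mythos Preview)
Your proposal is correct and matches the paper's proof essentially step for step: the paper also reduces from credulous acceptance under preferred semantics with a single voter approving $\{x\}$ for hardness, and for membership it likewise guesses admissible (rather than preferred) sets, checks $\rep_i(\pi_j)=1$ and $\pi_j\in\adm(F)$ in polynomial time, and then invokes the extension of admissible sets to preferred extensions together with the monotonicity of the containment condition.
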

\begin{proof}
	$\NP$-membership: given $\absaf = (\af,\voters,\ballots)$, guess a set of viewpoints $C = \{\pov_j \subseteq \args \mid 1 \leq j \leq k\}$. Note that $|C|\leq |\voters|$ and, for $1 \leq j \leq k$, $|\pov_j| \leq |\args|$. Thus, the size of $C$ is polynomial in the size of $\absaf$. Now, for each $i \in \voters$, verify that there is some $\pov_j \in C$ such that $\rep_i(\pov_j) = 1$ (clearly possible in polynomial time) and $\pov_j \in \adm(\af)$ (also possible in polynomial time~\citep{DvorakDunne2018}). Note that if $\pov_j \in \adm(\af)$ there is $\pov'_j \in \pref(\af)$ such that $\pov_j \subseteq \pov'_j$. Thus, the outcome $\outcome = \{\pov'_j \mid 1 \leq j \leq k\}$ $1$-represents all $i \in N$.
	
	$\NP$-hardness: let $(\af,x)$ be an arbitrary instance of credulous acceptance for preferred semantics. We construct the ABSAF $\absaf = (\af,\voters,\ballots)$ with a single agent $\voters = \{1\}$ and $\ballot{1} = \{x\}$. Then $x$ is credulously accepted for $\pref$ in $\af$ iff there is $\pov \in \pref(\af)$ with $x \in \pov$ iff there is $\pov \in \pref(\af)$ with $\ballot{1} \subseteq \pov$ iff there is an outcome $\outcome \subseteq \pref(\af)$ with $|\outcome| \leq 1$ that 1-represents voter~1.
\end{proof}

The above result also suggests that finding a straightforward characterization for representability is not possible in the general case. 
However, we can restrict approval-ballots and AFs to guarantee representation.
Specifically, assuming conflict-free votes, we can give a lower-bound for $\alpha$-representation by differentiating between arguments that defend themselves and arguments that do not.

\begin{restatable}{proposition}{conditionsRepresentationPrefLowerbound} \label{prop:conditions-representation-pref-lowerbound}
  Let $\absaf = (\af = (\args,\att),\voters,\ballots)$ be an ABSAF. 
  Let $\mathit{SD}(i) = \{a \! \in \! \ballot{i}  \! \mid \!  (b,a) \! \in \! \att  \Longrightarrow  (a,b) \! \in \! \att\}$ be the self-defending arguments approved by~$i \! \in \! N$.
  If $\ballot{i} \! \in \! \cf(\af)$ 
  and $\frac{|\mathit{SD}(i)|}{|\ballot{i}|} \geq \alpha$
  for all $i  \in N$, 
  then there is $\outcome \subseteq \pref(\af)$ such that $\rep_i(\outcome) \geq \alpha$ for all $i \in N$. 
\end{restatable}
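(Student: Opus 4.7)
The plan is to construct, for each voter $i$, a preferred extension that contains all of $i$'s self-defending approved arguments, and then take the outcome to be the collection of these extensions.

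First, I would examine the set $\mathit{SD}(i)$ on its own. Since $\mathit{SD}(i) \subseteq \ballot{i}$ and $\ballot{i}$ is conflict-free by assumption, $\mathit{SD}(i)$ is also conflict-free. Next, I would verify that $\mathit{SD}(i)$ is admissible in $\af$: by the definition of $\mathit{SD}(i)$, every argument $a \in \mathit{SD}(i)$ counter-attacks each of its attackers, so the singleton $\{a\}$ already defends $a$, and hence so does the larger set $\mathit{SD}(i)$. Together with conflict-freeness this yields $\mathit{SD}(i) \in \adm(\af)$.

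The next step uses the standard fact about AFs that every admissible set is contained in some subset-maximal admissible set, i.e., in a preferred extension. Hence there exists $\pov_i \in \pref(\af)$ with $\mathit{SD}(i) \subseteq \pov_i$. Since $\mathit{SD}(i) \subseteq \ballot{i}$, this gives
\[
\rep_i(\pov_i) \;=\; \frac{|\pov_i \cap \ballot{i}|}{|\ballot{i}|} \;\geq\; \frac{|\mathit{SD}(i)|}{|\ballot{i}|} \;\geq\; \alpha.
\]
Setting $\outcome = \{\pov_i \mid i \in \voters\} \subseteq \pref(\af)$, we get $\rep_i(\outcome) \geq \rep_i(\pov_i) \geq \alpha$ for every voter $i$, which is the required conclusion.

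There is no real obstacle here; the only subtle point is recognising that the condition $(b,a) \in \att \Longrightarrow (a,b) \in \att$ implies that each element of $\mathit{SD}(i)$ defends itself, so admissibility of $\mathit{SD}(i)$ comes for free once conflict-freeness of $\ballot{i}$ is used. The rest is a direct application of the fact that admissible sets extend to preferred ones. Note also that $\outcome$ need not be small — its size is bounded only by $n$ — which foreshadows the later sections where minimising $|\outcome|$ becomes the main computational challenge.
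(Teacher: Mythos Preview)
Your proposal is correct and follows essentially the same approach as the paper: show $\mathit{SD}(i)$ is admissible (conflict-free as a subset of $\ballot{i}$, and self-defending by definition), extend it to a preferred extension $\pov_i$, and collect these into $\outcome$. Your write-up is a bit more explicit about why admissibility holds, but the argument is identical in substance.
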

\begin{proof}
	Let $i \in \voters$ be an agent. Note that $\mathit{SD}(i) \in \adm(\af)$ since $\ballot{i} \in \cf(\af)$ by assumption and since each argument in $\mathit{SD}(i)$ defends itself. Thus, there is $\pov_i \in \pref(\af)$ such that $\pov_i \supseteq \mathit{SD}(i)$. Observe that $\rep_i(\pov_i) \geq \frac{|\mathit{SD}(i)|}{|\ballot{i}|} \geq \alpha$. Thus, $\outcome = \{\pov_i \mid i \in \voters\}$ $\alpha$-represents every voter in $\absaf$. 
\end{proof}


If we restrict ourselves to symmetric AFs~\citep{Coste-MarquisDM2005,Dunne2007}, a natural subclass of AFs where $(a,b) \in \att$ implies $(b,a) \in \att$, every agent always approves self-defending arguments. 
Thus, by Proposition~\ref{prop:conditions-representation-pref-lowerbound}, if we are given conflict-free ballots and a symmetric AF $\af$, there is an outcome $\outcome \subseteq \pref(\af)$ such that $\rep_i(\outcome) = 1$ for all $i \in N$. 

So far, we attempted to represent \emph{all} approved arguments for each voter.
However, even if we assume votes to be conflict-free,
some voters might approve of arguments that cannot occur together in a preferred extension, or even arguments that cannot be defended altogether. 
Voter~2 in Figure~\ref{fig:example-approving-undefendable-arguments}, for instance, approves two undefendable
arguments ($d$ and $e$). Thus, $\rep_2(\{a,c\}) = 0.5$ even though $\{a,c\}$ contains \emph{all} arguments in $\ballot{2}$ that can \emph{actually} occur in a viewpoint.
Hence, one could argue that voter~2 is already represented as well as possible.
We therefore introduce an alternative notion of representation that we call \emph{core-representation}.

%

\begin{definition} \label{def:core-representation}
  Let $\absaf=(\af,\voters,\ballots)$ be an ABSAF. 
  For $i \in \voters$ we let
    $\maxDef{i} = \max_{\pov \in \sigma(\af)} |\pov \cap \ballot{i}|$.
  If $\maxDef{i} = 0$ we let $\repDef_i(\pov) = 1$, otherwise
    \[\repDef_i(\pov)=\frac{|\pov \cap \ballot{i}|}{\maxDef{i}}.\]
  We say that $\pov \in \sigma(\af)$ $\alpha$-core-represents $i$ iff $\repDef_i(\pov) \geq \alpha$.
  Moreover, $\outcome \subseteq \sigma(\af)$ $\alpha$-core-represents $i$ iff
  $\repDef_i(\outcome)=\max_{\pov\in\outcome}\repDef_i(\pov) \geq \alpha.$
\end{definition}

  Using core-representation for the ABSAF from Figure~\ref{fig:example-approving-undefendable-arguments}, for voter~1 we get 
$\repDef_1(\{a,b\}) = 1$ and $\repDef_1(\{a,c\}) = 0.5$ 
while for voter~$2$ we get $\repDef_2(\{a,b\}) = 0.5$  and $\repDef_2(\{a,c\}) = 1$. 

In contrast to regular representation, it is always possible to find an outcome that perfectly core-represents every voter.

\begin{observation}
  Given an ABSAF $\absaf=(\af,\voters,\ballots)$, the outcome $\outcome = \pref(\af)$ 1-core-represents every voter $i \in \voters$. 
  Thus, there is an outcome $\outcome' \subseteq \pref(\af)$ of size $|\outcome'| = |\voters|$ that 1-core-represents every $i \in \voters$.
\end{observation}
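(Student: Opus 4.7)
The plan is to unpack the definition of core-representation and observe that the witnesses needed are precisely the preferred extensions that achieve the maximum intersection size with each voter's ballot. Since the maximum in the definition of $\maxDef{i}$ ranges over $\pref(\af)$ itself (using $\sigma = \pref$ as fixed in the preceding discussion), existence of such witnesses is immediate.

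For the first claim, I would fix an arbitrary voter $i \in \voters$ and split into two cases. If $\maxDef{i} = 0$, then by the case stipulated in Definition~\ref{def:core-representation} we have $\repDef_i(\pov) = 1$ for every $\pov$, so in particular $\repDef_i(\pref(\af)) = 1$. If $\maxDef{i} > 0$, then by definition of $\maxDef{i}$ as a maximum there exists $\pov^\star \in \pref(\af)$ with $|\pov^\star \cap \ballot{i}| = \maxDef{i}$, whence $\repDef_i(\pov^\star) = \maxDef{i}/\maxDef{i} = 1$. Since $\pov^\star \in \pref(\af) = \outcome$, taking the maximum in $\repDef_i(\outcome) = \max_{\pov \in \outcome} \repDef_i(\pov)$ yields $\repDef_i(\outcome) \geq 1$, and it is obviously at most $1$.

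For the second claim, I would assign to each voter $i$ a single witness $\pov_i \in \pref(\af)$: any extension achieving $|\pov_i \cap \ballot{i}| = \maxDef{i}$ when $\maxDef{i} > 0$, and any preferred extension otherwise (which exists since $\pref(\af)$ is nonempty for every finite AF). Setting $\outcome' = \{\pov_i \mid i \in \voters\}$ gives $|\outcome'| \leq |\voters|$, and by the same argument as in the first part $\repDef_i(\outcome') = 1$ for all $i$; if strict inequality holds, one can pad $\outcome'$ with arbitrary further preferred extensions (if available) to reach the stated cardinality, since adding extensions can only increase the max in $\repDef_i(\outcome')$.

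There is essentially no obstacle here: the content of the observation is entirely in recognising that the quantifier defining $\maxDef{i}$ already ranges over the set we propose as $\outcome$, so the maximum is attained inside it. The only subtlety worth flagging in the write-up is the edge case $\maxDef{i} = 0$, which is handled by the explicit convention in Definition~\ref{def:core-representation}.
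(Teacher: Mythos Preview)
Your argument is correct and is exactly the unwinding of Definition~\ref{def:core-representation} that the statement invites; the paper itself offers no proof, treating the observation as immediate. Your handling of the $\maxDef{i}=0$ edge case and the padding remark (``if available'') are appropriate, and in fact the latter flags a minor imprecision in the paper's own phrasing: when $|\pref(\af)|<|\voters|$ one cannot literally attain $|\outcome'|=|\voters|$, so the intended reading is $|\outcome'|\leq|\voters|$.
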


However, deciding whether there is a \emph{small} outcome that perfectly core-represents every voter is harder than in the case of regular representation,
for which this problem is $\NP$-complete (cf.\ Proposition~\ref{prop:representation-np-hardness}).
We define \textsc{$1$-Core-Representability} analogously to \textsc{$1$-Representability} (cf.\ Definition~\ref{def:decision-problem-representability}), except that we ask for an outcome of size $|\outcome| \leq k$ that $1$-core-represents every voter $i \in N$. 


\begin{restatable}{theorem}{coreRepresentabilityThetaCompleteness} \label{thm:core-representability-completeness}
  \textsc{1-Core-Representability} is $\ThetaP{2}$-complete.
  Moreover, $\ThetaP{2}$-hardness holds even if there are only two voters, i.e., for $n = 2$. 
\end{restatable}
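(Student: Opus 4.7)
My plan splits into showing $\ThetaP{2}$-membership and $\ThetaP{2}$-hardness (already for $n=2$).

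\emph{Membership.} The first step is a reformulation. For any $\Omega \subseteq \pref(\af)$ and voter $i$, let $g_i(\Omega) = \max_{\pov \in \Omega} |\pov \cap \ballot{i}|$. Since $g_i(\Omega) \leq \maxDef{i}$ always, $\Omega$ $1$-core-represents every voter iff $\sum_{i} g_i(\Omega) = \sum_{i} \maxDef{i}$. Moreover, because every admissible set extends to a preferred extension containing at least as many approved arguments, $\maxDef{i} = \max_{S \in \adm(\af)} |S \cap \ballot{i}|$, so it suffices to search for outcomes among admissible sets. The algorithm then issues polynomially many NP queries in parallel: (i) for each voter $i$ and each $v \leq |\ballot{i}|$, ``does some $S \in \adm(\af)$ satisfy $|S \cap \ballot{i}| \geq v$?''; and (ii) for each $V \leq \sum_{i} |\ballot{i}|$, ``does some family $\Omega$ of at most $k$ admissible sets of $\af$ satisfy $\sum_{i} g_i(\Omega) \geq V$?''. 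Both query types are in $\NP$. From the answers to (i) one extracts each $\maxDef{i}$ and hence $V^{\ast} = \sum_{i} \maxDef{i}$; the input is a yes-instance iff the (ii)-query at $V^{\ast}$ returns yes. Since all queries are non-adaptive, the problem is in $\ThetaP{2}$ by the characterization $\ThetaP{2} = \Polytime_{\parallel}^{\NP}$.

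\emph{Hardness for $n = 2$.} I plan to reduce from a standard $\ThetaP{2}$-complete ``bit of an NP-maximum'' problem, for instance ``is $\text{maxSAT}(\phi)$ odd?''. Using a classical SAT-to-AF encoding, preferred extensions of the constructed $\af$ correspond to truth assignments of $\phi$, and each clause $c$ contributes an argument $s_c$ included in an extension iff the associated assignment satisfies $c$. Voter~$1$'s ballot is $\{s_c : c \in \phi\}$, so $\maxDef{1} = \text{maxSAT}(\phi)$. A coupling gadget is then appended so that one designated argument $p^{\ast}$, which is the unique argument approved by voter~$2$, is included in a preferred extension $\pov$ iff $|\pov \cap \{s_c : c \in \phi\}|$ has the target parity. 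For $k = 1$, a single preferred extension attaining both $\maxDef{1}$ and $\maxDef{2}$ exists iff some maximum-satisfying assignment has the prescribed parity, which is equivalent to $\text{maxSAT}(\phi)$ itself having that parity, since all maximum-satisfying assignments yield the same count.

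\emph{Main obstacle.} The membership argument is routine once the ``sum of maxima'' reformulation is in hand. The technical core is the hardness reduction: AF semantics have no native arithmetic, so engineering a gadget that correctly records $|\pov \cap \{s_c : c \in \phi\}| \bmod 2$ into the acceptance of $p^{\ast}$ is delicate and must be done while respecting preferred-extension semantics. If a direct parity gadget proves too intricate, a cleaner alternative is to reduce from a $\ThetaP{2}$-complete problem native to abstract argumentation -- such as ideal acceptance for preferred semantics -- whose gadget constructions already align with AF semantics and can be repackaged for two voters essentially off the shelf, with voter~$1$'s ballot encoding ``an argument belongs to some preferred extension'' and voter~$2$'s ballot encoding ``no attacker simultaneously does''.
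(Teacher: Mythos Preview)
Your membership argument is correct and is essentially the paper's argument, just packaged as a single parallel round via the sum identity $\sum_i g_i(\Omega)=\sum_i\maxDef{i}$ instead of the paper's two rounds (first compute every $\maxDef{i}$, then a single $\NP$ query). Both are valid $\ThetaP{2}$ algorithms.

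The hardness part, however, has a real gap. You correctly identify that the whole difficulty lies in the coupling gadget, and then you do not build it. Encoding ``$|\pov\cap\{s_c\}|$ is odd'' as the acceptance status of a single argument under preferred semantics is not a matter of care with a standard trick; preferred semantics has no counting or parity primitive, and I do not see how to realise such a gadget in polynomial size without essentially re-proving a $\ThetaP{2}$-hardness result for AFs from scratch. Your fallback via ideal acceptance is too sketchy to evaluate: the sentence about voter~$1$ encoding ``belongs to some preferred extension'' and voter~$2$ encoding ``no attacker simultaneously does'' does not translate into a $k=1$ instance of \textsc{1-Core-Representability} in any obvious way, since $\maxDef{1}$ and $\maxDef{2}$ would have to interact in exactly the right manner, and you have not shown how.

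The paper sidesteps all of this by choosing a source problem that needs no arithmetic at all: \textsc{CardMaxSat} (given a CNF $\varphi$ and a variable $y$, is $y$ true in some cardinality-maximal model?). With the standard variable/clause encoding plus $|X|+1$ copies of a ``$\varphi$ satisfied'' argument, voter~$1$ approves $X\cup\{\varphi_1,\dots,\varphi_{|X|+1}\}$, which forces any extension achieving $\maxDef{1}$ to encode a cardinality-maximal model; voter~$2$ simply approves $\{y\}$. For $k=1$, both voters are $1$-core-represented iff some cardinality-maximal model contains $y$. No parity, no counting gadget --- the ``maximum'' is absorbed into $\maxDef{1}$ itself, and the ``bit'' is just membership of a single argument. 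I recommend you switch to this source problem; your current plan leaves the central construction unwritten.
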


\begin{proof}[Proof of Theorem~\ref{thm:core-representability-completeness} (Membership)]
	Let $\absaf = (\af = (\args,\att), \voters, \ballots)$ be the input ABSAF and $k \in \{1,\ldots,|\voters|\}$.
	We make use of the fact that $\ThetaP{2}$ coincides with $\DeltaParallel{\ell}$~\citep{BussHay1991}, which is the class of problems solvable in polynomial time with $\ell$ rounds of parallel $\NP$-oracle calls. Note that $\ell$ must be a constant.
	For our purposes, $\ell = 2$ rounds suffice. 
	
	In the first round, for every $i \in \voters$ and every $m \in \{1,\ldots,|\args|\}$, use an $\NP$-oracle to decide whether there is $\pov \in \adm(\af)$ such that $|\pov \cap \ballot{i}| = m$. This problem is in $\NP$, since we can guess $\pov \subseteq \args$ and check whether $\pov \in \adm(\af)$ and $|\pov \cap \ballot{i}| = m$. 
	Then, for each voter $i \in \voters$ we compute $\maxDef{i} = \max_{\pov \in \pref(\af)} |\pov \cap \ballot{i}| = \max_{\pov \in \adm(\af)} |\pov \cap \ballot{i}|$.
	
	In the second round, we use a single $\NP$-oracle call to execute the following decision-procedure: guess argument sets $\pov_1,\ldots,\pov_k \subseteq \args$. 
	Then, check that
	\begin{itemize}
		\item $\pov_j \in \adm(\af)$ for every $1 \leq j \leq k$,
		\item $|\pov_j \cap \ballot{i}| = \maxDef{i}$ for every $1 \leq i \leq |\voters|$ and some $1 \leq j \leq k$. 
	\end{itemize}
	Since $\pov_j \in \adm(\af)$ there is a preferred extension $\pov'_j \in \pref(\af)$ such that $\pov_j \subseteq \pov'_j$. For every $1 \leq i \leq |\voters|$ and some $1 \leq j \leq k$ we have $|\pov_j \cap \ballot{i}| \leq |\pov'_j \cap \ballot{i}| \leq \maxDef{i} = |\pov_j \cap \ballot{i}|$, i.e., $|\pov'_j \cap \ballot{i}| = \maxDef{i}$ and therefore $\repDef_i(\pov'_j) = 1$. 
	Thus, for $\Omega = \{\pov'_1,\ldots,\pov'_k\}$ we have $\repDef_i(\outcome) = 1$ for every $i \in N$.
\end{proof}

\begin{proof}[Proof of Theorem~\ref{thm:core-representability-completeness} (Hardness)]
	Let $(\varphi,y)$ be an instance of the $\ThetaP{2}$-hard problem \textsc{CardMaxSat}~\citep{CreignouPW2018}, where we are given a propositional formula $\varphi$ in CNF and a variable $y$, and aks whether there is a cardinality-maximal model of $\varphi$ that contains $y$.
	Let $X$ denote the set of variables contained in $\varphi$, and $C$ the set of clauses. 
	Note that $\varphi$ can be assumed to be satisfiable, since $\varphi$ can be replaced by $\varphi' = \varphi \lor (\bigwedge_{x \in X} (\neg x))$, which can further be transformed into CNF if needed. Indeed, $\varphi'$ is always satisfied by $I = \emptyset$ and $(\varphi,y)$ is a yes-instance of \textsc{CardMaxSat} iff $(\varphi',y)$ is a yes-instance of \textsc{CardMaxSat}.
	
	We construct an instance $(\absaf, k)$ with $k = 1$ and the ABSAF $\absaf = (\af = (\args,\att), \voters, \ballots)$ defined as follows:
	\begin{itemize}
		\item $\args = X \cup \{\overline{x} \! \mid \! x \! \in \! X\} \cup C \cup \varphi^*$, where $\varphi^* = \{\varphi_i \! \mid \! 1 \! \leq \! i \leq \!  |X|  +  1\}$;
		\item $\att = \{ (x,\overline{x}),(\overline{x},x) \mid x \in X \} \cup \{ (x,c) \mid x \! \in \! X, c \! \in \! C, x \! \in \! c \} \cup \{ (\overline{x},c) \mid x \! \in \! X, c \! \in \! C, \neg x \! \in \! c \} \cup \{ (c,c),(c,\varphi_i) \mid c \! \in \! C, \varphi_i \! \in \varphi^* \! \}$;
		\item $\voters = \{1,2\}$;
		\item $\ballots$ is given by $\ballot{1} = X \cup \varphi^*$ and $\ballot{2} = \{y\}$.
	\end{itemize}
	Figure~\ref{fig:reduction-cardmaxsat-corerepresentability} shows the AF constructed from the formula containing the following clauses: $c_1 = (x_1 \lor \neg x_2 \lor x_3)$, $c_2 = (\neg x_1 \lor \neg x_3 \lor x_4)$, $c_3 = (x_2 \lor x_3 \lor \neg x_4)$. If $y = x_1$, then the approval ballots are $\ballot{1} = \{x_1,\ldots,x_4,\varphi_1,\ldots,\varphi_5\}$ and $\ballot{2} = \{x_1\}$.
	
	We show that $(\varphi,y)$ is a yes-instance of \textsc{CardMaxSat} iff there is an outcome $\outcome$ of size $|\outcome| = 1$ that 1-core-represents all $i \in N$.
	
	``$\implies$'': Assume $y$ is true in a cardinality-maximal model $I \subseteq X$ of $\varphi$. 
	Let $\pov = I \cup \{\overline{x} \mid x \in X \setminus I\} \cup \varphi^*$.
	Since $I$ satisfies every clause in $\varphi$, every clause-argument $c \in C$ is attacked by some variable-argument $v \in I \cup \{\overline{x} \mid x \in X \setminus I\}$. Thus, every argument in $\args$ is either in $\pov$ or attacked by $\pov$, i.e.,
	$\pov$ defends itself against all outside attackers. Since, by construction, also $\pov \in \cf(\af)$ we have $\pov \in \adm(\af)$.
	Moreover, $\pov$ is a subset-maximal admissible set, since it is in conflict with all arguments outside $\pov$. Thus, $\pov \in \pref(\af)$. 
	Consider any $\pov' \in \pref(\af)$. There are two cases:
	\begin{itemize}
		\item $\pov' \cap \varphi^* = \emptyset$. Then $|\pov' \cap \ballot{1}| \leq |X| < |\varphi^*| \leq |\pov \cap \ballot{1}|$.
		\item $\pov' \cap \varphi^* \neq \emptyset$. Then $\pov'$ defends some $\varphi_i \in \varphi^*$ against $(c,\varphi_i)$ for every $c \in C$. Thus, $I' = \pov' \cap X$ satisfies all clauses in $\varphi$, i.e., $I' \models \varphi$. Since $I$ is a cardinality-maximal model of $\varphi$, $|I'| \leq |I|$. Therefore, $|\pov' \cap \ballot{1}| \leq |I' \cup \varphi^*| \leq |I \cup \varphi^*| = |\pov \cap \ballot{1}|$.
	\end{itemize}
	We conclude that $\maxDef{1} = |\pov \cap \ballot{1}|$, i.e., $\pov$ 1-core-represents voter~1.
	For voter~2, since $y \in I$ we have $y \in \pov$ and therefore $|\pov \cap(A_2)| = 1$. Observe that $\maxDef{2} = 1$. Thus, $\pov$ also 1-core-represents voter~2.
	
	``$\impliedby$'': Assume there is an outcome $\outcome = \{\pov\} \subseteq \pref(\af)$ of size~1 that 1-core-represents all voters $i \in \voters = \{1,2\}$. 
	Let $I = \pov \cap X$.
	
	Let $I' \subseteq X$ be an arbitrary model of $\varphi$, i.e., $I' \models \varphi$. Such $I'$ exists since $\varphi$ is satisfiable. Let $\pov' = I' \cup \{\overline{x} \mid x \in X \setminus I'\} \cup \varphi^*$. By the same argument as above, $\pov' \in \pref(\af)$. 
	Since $\pov$ 1-core-represents voter~$1$, it must be that $|X|+1 = |\varphi^*| \leq |I' \cup \varphi^*| = |\pov' \cap \ballot{1}| \leq |\pov \cap \ballot{1}| \leq |I \cup \varphi^*|$. By $|X|+1 \leq |\pov \cap \ballot{1}|$ we have $\varphi_i \in \pov$ for at least one $\varphi_i \in \varphi^*$.
	Therefore, $\pov$ defends $\varphi_i$ from all clause-arguments $c \in C$, which means that $I$ satisfies all clauses in $\varphi$ and we have $I \models \varphi$.
	Furthermore, by $|I' \cup \varphi^*| \leq |I \cup \varphi^*|$ we have $|I'| \leq |I|$. Since $I'$ was chosen as an arbitrary model of $\varphi$, we can conclude that $I$ is a cardinality-maximal model of $\varphi$.
	
	Lastly, since $\pov$ 1-core-represents voter~2, and since $\maxDef{2} = 1$, it must be that $y \in \pov$ and therefore $y \in I$.
\end{proof}

\begin{figure}[t]
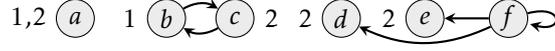

	\centering
	\tikz{
		\node[arg,label={left}:\text{1,2}] (a) at (0,0) {$a$};
		\node[arg,label={left}:\text{1}] (b) at (1.2,0) {$b$};
		\node[arg,label={right}:\text{2}] (c) at (2.1,0) {$c$};
		\node[arg,label={left}:\text{2}] (d) at (3.5,0) {$d$};
		\node[arg,label={left}:\text{2}] (e) at (4.6,0) {$e$};
		\node[arg] (f) at (5.7,0) {$f$};
		\draw[attack] 
		(b) edge [bend left] (c) 
		(c) edge [bend left] (b)
		(f) edge [bend left=25] (d)
		(f) edge (e)
		(f) edge [loop right] (f);
	}
	\caption{Voter~2 approves undefendable arguments.}
	\label{fig:example-approving-undefendable-arguments}
\end{figure}

\begin{figure}[t]
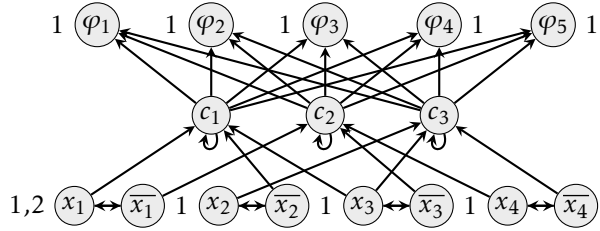

  \centering
  \tikz{
    \node[arg,label={left}:\text{1,2}] (x1) at (0,0) {$x_1$};
    \node[arg] (x1neg) at (0.9,0) {$\overline{x_1}$};
    \node[arg,label={left}:\text{1}] (x2) at (1.9,0) {$x_2$};
    \node[arg] (x2neg) at (2.8,0) {$\overline{x_2}$};
    \node[arg,label={left}:\text{1}] (x3) at (3.8,0) {$x_3$};
    \node[arg] (x3neg) at (4.7,0) {$\overline{x_3}$};
    \node[arg,label={left}:\text{1}] (x4) at (5.7,0) {$x_4$};
    \node[arg] (x4neg) at (6.6,0) {$\overline{x_4}$};
    \node[arg] (c1) at (1.8,1.2) {$c_1$};
    \node[arg] (c2) at (3.3,1.2) {$c_2$};
    \node[arg] (c3) at (4.8,1.2) {$c_3$};
    \node[arg,label={left}:\text{1}] (phi1) at (0.3,2.4) {$\varphi_1$};
    \node[arg,label={left}:\text{1}] (phi2) at (1.8,2.4) {$\varphi_2$};
    \node[arg,label={left}:\text{1}] (phi3) at (3.3,2.4) {$\varphi_3$};
    \node[arg,label={right}:\text{1}] (phi4) at (4.8,2.4) {$\varphi_4$};
    \node[arg,label={right}:\text{1}] (phi5) at (6.3,2.4) {$\varphi_5$};
    \draw[attack] 
    (x1) edge (x1neg)
    (x1neg) edge (x1)
    (x2) edge (x2neg)
    (x2neg) edge (x2)
    (x3) edge (x3neg)
    (x3neg) edge (x3)
    (x4) edge (x4neg)
    (x4neg) edge (x4)
    (x1) edge (c1)
    (x3) edge (c1)
    (x2neg) edge (c1)
    (x3neg) edge (c2)
    (x1neg) edge (c2)
    (x4) edge (c2)
    (x2) edge (c3)
    (x3) edge (c3)
    (x4neg) edge (c3)
    (c1) edge[loop below, looseness=5] (c1)
    (c2) edge[loop below, looseness=5] (c2)
    (c3) edge[loop below, looseness=5] (c3)
    (c1) edge (phi1)
    (c1) edge (phi2)
    (c1) edge (phi3)
    (c1) edge (phi4)
    (c1) edge (phi5)
    (c2) edge (phi1)
    (c2) edge (phi2)
    (c2) edge (phi3)
    (c2) edge (phi4)
    (c2) edge (phi5)
    (c3) edge (phi1)
    (c3) edge (phi2)
    (c3) edge (phi3)
    (c3) edge (phi4)
    (c3) edge (phi5);
  }
  \caption{Construction used in the proof of Theorem~\ref{thm:core-representability-completeness}.}
  \label{fig:reduction-cardmaxsat-corerepresentability}
\end{figure}

If the number of arguments/extensions in our given ABSAF is small, however, we can deal with representability more efficiently.

\begin{restatable}{proposition}{coreRepresentabilityFPT} \label{prop:core-representation-is-fpt}
  \textsc{1-Representability} and \textsc{1-Core-Representability} are FPT with respect to the number of arguments in the given ABSAF. 
\end{restatable}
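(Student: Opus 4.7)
Let $p=|\args|$ be our parameter. The plan is to exploit the fact that everything that depends on the structure of the AF (rather than on the number of voters) can be enumerated in time bounded by a function of $p$ alone, while the only remaining combinatorial task -- finding a small outcome that covers all voters -- has a search space whose size also depends only on $p$.

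First, I would enumerate $\pref(\af)$ in full. There are at most $2^p$ candidate subsets of $\args$, and for each candidate $S$ we can check admissibility and subset-maximality among admissible sets in polynomial time in $p$. Hence we can list $\pref(\af)$ in time $O(2^p \cdot \mathrm{poly}(p))$, and in particular $|\pref(\af)| \leq 2^p$. This list does not depend on the voters at all, only on $\af$.

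Second, for every voter $i$ I would compute the set $\Pi_i \subseteq \pref(\af)$ of preferred extensions that witness $i$'s representation. For \textsc{1-Representability}, $\Pi_i = \{\pov \in \pref(\af) \mid \ballot{i} \subseteq \pov\}$. For \textsc{1-Core-Representability}, I first compute $\maxDef{i} = \max_{\pov \in \pref(\af)} |\pov \cap \ballot{i}|$ by scanning the list from step one, and then set $\Pi_i = \{\pov \in \pref(\af) \mid |\pov \cap \ballot{i}| = \maxDef{i}\}$ (with $\Pi_i = \pref(\af)$ if $\maxDef{i} = 0$, where the convention $\repDef_i \equiv 1$ applies). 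Each $\Pi_i$ is computable in time polynomial in the input size times $|\pref(\af)| \leq 2^p$, so the whole step takes $O(n \cdot 2^p \cdot \mathrm{poly}(p))$.

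Finally, the decision problem reduces to a hitting-set question: does there exist $\outcome \subseteq \pref(\af)$ with $|\outcome|\le k$ such that $\outcome \cap \Pi_i \neq \emptyset$ for every $i \in \voters$? If $k \geq |\pref(\af)|$, we can simply take $\outcome = \pref(\af)$, and the answer is ``yes'' iff $\Pi_i \neq \emptyset$ for every voter, which is checkable in polynomial time. Otherwise $k < 2^p$, so the number of candidate outcomes is at most $\binom{2^p}{k} \leq 2^{p \cdot 2^p}$, and we can brute-force check each one. The only potential obstacle is that hitting set is \NP-hard in general; but here both the ground set and the number of distinct constraints $\Pi_i$ are bounded by $2^p$ (two voters with identical $\Pi_i$ impose the same constraint), so the entire search space is a function of $p$ alone. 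The total running time is $f(p) \cdot \mathrm{poly}(|\absaf|)$ for a computable $f$, establishing FPT membership for both problems.
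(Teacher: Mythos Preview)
Your proof is correct and follows essentially the same approach as the paper: enumerate $\pref(\af)$ (the paper invokes the tighter $O^*(3^{|\args|/3})$ bound and enumeration algorithm of Dunne et al.\ and Gaspers--Li, while you use the cruder $2^p$), precompute the per-voter data (the paper computes $\maxDef{i}$ directly; your $\Pi_i$ is an equivalent reformulation as a hitting-set instance), and then brute-force over all $\binom{m}{k}$ candidate outcomes. One small slip worth fixing: checking that an admissible set is subset-maximal is not polynomial in $p$ per candidate (the problem is $\coNP$-hard in general), but this does not affect your argument---simply list all $\leq 2^p$ admissible sets first and then filter for the $\subseteq$-maximal ones, which keeps the total time within $f(p)\cdot\mathrm{poly}(|\absaf|)$.
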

\begin{proof}
  Let $\absaf = (\af = (\args,\att), \voters, \ballots)$ and $k \in \{1,\ldots,|\voters|\}$. Let $m = |\pref(\af)|$.
  Note that $m$ is in $O^*(3^{|\args|/3}) \subseteq O(2^{|\args|})$~\citep{DunneDLW15}, and that $\pref(\af)$ can be enumerated in $O^*(3^{|\args|/3})$ time~\citep{GaspersL19}.
  Moreover, we assume $k \leq m$ since $\outcome \subseteq \pref(\af)$ for any outcome $\outcome$.
  
  For core-representation, determine $\maxDef{i}$ for all $i \in \voters$ by enumerating every $\pov \in \pref(\af)$ and computing $|\pov \cap \ballot{i}|$. $\maxDef{i}$ is the size of the largest intersection. This runs in $O^*(3^{|\args|/3})$ time.
  
  Then, check if there is a $k$-tuple $\outcome$ of preferred extensions such that  $\rep_i(\outcome) = 1$ (resp.\ $\repDef_i(\outcome) = 1$) for every $i \in \voters$. There are ${m \choose k}$ such $k$-tuples, i.e., this can be done in $O^*({m \choose k}) \subseteq O^*(m^k)$ time. 
\end{proof}

The results in the remainder of the paper apply to both regular representation (Definition~\ref{def:representation}) and core-representation (Definition~\ref{def:core-representation}). 

We have seen that we generally cannot $1$-represent all voters,
but we can $1$-core-represent them. If the AF is small enough,
we can even do so efficiently. On the other hand, when dealing with larger instances,
we may require a huge number of preferred extensions to $1$-(core-)represent
all voters which does not help to understand the discussion.
In this case, we might instead aim for an optimal representation with a fixed number of preferred extensions.

\section{Optimizing Representation}




Let us now consider the question how to optimally represent the voters in an ABSAF with
a fixed number of preferred extensions.
One idea, commonly used in social choice theory and referred to as the \emph{Utilitarian rule}, is to pick an outcome $\outcome$ maximizing the average representation across all voters\footnote{We assume that an arbitrary tiebreaking mechanism is used for all rules.}, i.e., $\sum_{i\in \voters}\rep_i(\outcome)$. The second standard approach, referred to as the \emph{Egalitarian rule}, is to pick an outcome maximizing the representation of the least-represented voter, i.e., $\min_{i\in \voters}\rep_i(\outcome)$. 

A family of rules generalizing these ideas is based on \emph{ordered weighted averaging} (OWA) vectors~\citep{yager1988ordered}.
Given an outcome $\outcome$, let $\vec{s}(\outcome)=(s_1,\ldots,s_n)$ be the vector $(\rep_1(\outcome),\ldots,\rep_n(\outcome))$ sorted in non-decreasing order (i.e., $s_1$ is the least-represented voter). For a non-increasing vector of non-negative weights $\vec{\weight}=(\weight_1,\ldots,\weight_n)$, where $\weight_1>0$, the corresponding OWA rule is defined as:
\[ \OWA(\absaf) \in \argmax_{\outcome\subseteq\pref(\af)\colon|\outcome|\leq k} \vec{\weight}\cdot\vec{s}(\outcome) . \]
Here, $\cdot$ is the dot product. The Utilitarian and Egalitarian rules are given by $\vec{\weight}=(1,\ldots,1)$ and $\vec{\weight}=(1,0,\ldots,0)$, respectively. We will also consider the \emph{Harmonic rule} based on the vector $(1,\nicefrac{1}{2},\ldots,\nicefrac{1}{n})$. This sequence of weights is often used to achieve proportionality in multiwinner voting \citep{abcbook}.
OWA-rules in general have been studied, e.g., in the context of multiple referenda/issues~\citep{amanatidis2015multiple, barrot2017manipulation, LacknerEtAlAAMAS2023}.

An alternative approach is simply to maximize the number of voters that are $1$-represented. We call this rule $\maxCov$, defined as:
\[ \maxCov(\absaf) \in \argmax_{\outcome\subseteq\pref(\af)\colon|\outcome|\leq k} |\{i\in \voters\colon \rep_i(\outcome)=1 \}| . \]

\begin{example}
	Consider an ABSAF with voters $\voters = \{1,2,3,4\}$ and extensions $\pref(\af) = \{\pov_1,\pov_2,\pov_3\}$. 
	Say that we want to pick an outcome of size $k = 1$ 
	and that the representation scores are as follows:
	\begin{center}
	\begin{tabular}{c|cccc}
\toprule
		$\outcome$ & $\rep_1(\outcome)$ & $\rep_2(\outcome)$ & $\rep_3(\outcome)$ & $\rep_4(\outcome)$ \\ \midrule
		$\{\pov_1\}$ & 0 & 1 & 1 & 1 \\
		$\{\pov_2\}$ & 0.4 & 0.5 & 1 & 1 \\
		$\{\pov_3\}$ & 0.5 & 0.5 & 0.5 & 0.5\\
\bottomrule
	\end{tabular}
	\end{center}
	It can be verified that Utilitarian and \maxCov\ select $\{\pov_1\}$, Harmonic selects $\{\pov_2\}$, and Egalitarian selects $\{\pov_3\}$.
\end{example}


\begin{example}
	Selecting outcomes of size $k=2$
	for the Canada electoral reform data (cf.\ Example~\ref{example:absaf}),
	Utilitarian and \maxCov\ pick
	$\outcome_1 = \{\{p_1, p_2, \allowbreak p_3, \allowbreak s_1, s_2\}, \allowbreak \{f_2, p_1, \allowbreak p_2,m_1\}\}$ 
	while Harmonic and Egalitarian pick
	$\outcome_2 = \{\{p_1, p_2, \allowbreak p_3, \allowbreak s_1, s_2\}, \allowbreak  \{f_1,f_2, \allowbreak m_1\}\}$.
	The minimum and average (core-)representation-scores 
	across all voters are:
	\begin{center}
	\begin{tabular}{c|cccc}
\toprule
		& min & avg & min (core) & avg (core)\\ \midrule
		$\outcome_1$ & 0 & 0.9378 & 0 & 0.9706 \\
		$\outcome_2$ & 0.5 & 0.9360 & 0.5 & 0.9697  \\
\bottomrule
	\end{tabular}
	\end{center}	
  	The average representation for Utilitarian is only marginally larger than for Egalitarian/Harmonic in this instance.
\end{example}

Note that all rules achieve perfect representation if possible.

\begin{restatable}{proposition}{allRulesFindPerfectOutcomes}\label{prop:all-rules-find-perfect-outcomes}
  The Egalitarian rule returns an outcome of size~$k$ in which every voter is $\alpha$-represented iff such an outcome exists. 
  Moreover, every OWA rule, as well as \maxCov, returns an outcome of size $k$ in which every voter is 1-represented iff such an outcome exists. 
\end{restatable}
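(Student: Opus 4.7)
The plan is to observe that all three families of rules optimize objectives for which either an upper bound is achieved exactly by outcomes that (respectively) $\alpha$- or $1$-represent every voter, or the quantity being maximized is directly the number of $1$-represented voters. This makes the proof essentially a definition-unwinding in each case; I would handle the three rules separately.

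\textbf{Egalitarian.} Here the objective is $\min_{i\in \voters}\rep_i(\outcome) = s_1(\outcome)$. The key observation is that an outcome $\outcome$ of size at most $k$ $\alpha$-represents every voter iff $s_1(\outcome)\geq \alpha$. Thus, if any outcome $\outcome^*$ of size at most $k$ satisfies $s_1(\outcome^*)\geq \alpha$, then by definition the Egalitarian maximizer $\outcome$ satisfies $s_1(\outcome)\geq s_1(\outcome^*)\geq \alpha$ and so $\alpha$-represents every voter. The converse is trivial.

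\textbf{General OWA.} Suppose an outcome $\outcome^*$ of size at most $k$ $1$-represents every voter, so $\vec{s}(\outcome^*)=(1,\dots,1)$ and $\vec{\weight}\cdot\vec{s}(\outcome^*)=\sum_j\weight_j$. For an arbitrary outcome $\outcome$, since $0\leq s_j(\outcome)\leq 1$ and $\weight_j\geq 0$, we get $\vec{\weight}\cdot\vec{s}(\outcome)\leq \sum_j\weight_j$, so the OWA-maximizer $\outcome$ must attain exactly $\sum_j\weight_j$. Now I would invoke $\weight_1>0$: if $s_1(\outcome)<1$, then
\[\vec{\weight}\cdot\vec{s}(\outcome)=\weight_1 s_1(\outcome)+\sum_{j\geq 2}\weight_j s_j(\outcome)<\weight_1+\sum_{j\geq 2}\weight_j=\sum_j\weight_j,\]
contradicting optimality. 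Hence $s_1(\outcome)=1$, and since $s_1$ is the minimum of the $\rep_i(\outcome)$, every voter is $1$-represented. The converse is again trivial.

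\textbf{MaxCov.} This is the easiest case since the rule directly maximizes $|\{i\in \voters\colon \rep_i(\outcome)=1\}|$. If some outcome $\outcome^*$ of size at most $k$ $1$-represents every voter, the objective evaluated at $\outcome^*$ equals $|\voters|$, which is also its absolute maximum; hence any MaxCov-maximizer $\outcome$ attains this maximum, meaning every voter is $1$-represented.

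There is no real obstacle here; the only subtlety is in the OWA case, where we must leverage both $\weight_1>0$ and the fact that $s_1$ is the \emph{smallest} entry of $\vec{s}(\outcome)$ to translate ``the weighted sum attains its maximum'' into ``the minimum representation equals $1$.'' Note that this is also exactly why the Egalitarian statement is stronger ($\alpha$-representation for arbitrary $\alpha$) than the OWA statement: for $\alpha<1$ the bound $\sum_j \weight_j$ used above no longer witnesses that $s_1(\outcome)\geq\alpha$ for a general OWA-maximizer, so the argument does not extend beyond $\alpha=1$.
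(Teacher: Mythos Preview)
Your proposal is correct and follows essentially the same approach as the paper: for Egalitarian and \maxCov{} the argument is identical, and for the OWA case both you and the paper reduce to the observation that $\sum_j(1-s_j)\weight_j>0$ whenever $s_1<1$ (the paper phrases this as a direct contradiction, you phrase it as ``the maximizer attains the upper bound $\sum_j\weight_j$, hence $s_1=1$''). Your closing remark on why the $\alpha$-statement does not extend to general OWA rules is a nice addition not made explicit in the paper.
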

\begin{proof}
	Suppose that there is an outcome $\outcome$ of size $k$ that $\alpha$-represents all voters (for some $\alpha\in[0..1]$). Assume towards a contradiction that Egalitarian chooses an outcome $\outcome^\prime$ that does not $\alpha$-represent all voters. Thus, the least-satisfied voter will have satisfaction $s<\alpha$. But with outcome $\outcome$ the least-satisfied voter would have higher satisfaction (at least $\alpha$). Contradiction: Egalitarian must choose an outcome that $\alpha$-represents all voters.
	
	Next, suppose that there is an outcome $\outcome$ of size $k$ that $1$-represents all voters. It follows from definition that \maxCov{} must select this outcome or any other outcome that $1$-represents all voters. Now, consider a generic rule $\OWA$. Assume towards a contradiction that $\OWA$ chooses an outcome $\outcome^\prime$ that does not $1$-represent all voters. Let $\vec{s}(\outcome^\prime)=(s_1,\ldots,s_n)$. We get the following:
	\begin{align*}
	\vec{s}(\outcome)\cdot\vec{\weight} &> \vec{s}(\outcome^\prime)\cdot\vec{\weight} && \iff\\
	\sum_{i=1}^n \weight_i &> \sum_{i=1}^n s_i\weight_i && \iff\\
	\sum_{i=1}^n(1-&s_i)\weight_i > 0.
	\end{align*}
	
	Recall that $\vec{\weight}$ is non-negative and that, for all $i\in\voters$, $s_i\in[0..1]$. Furthermore, recall that $\weight_1>0$, and notice that, since $\outcome^\prime$ does not $1$-represent all voters, we have $s_1<1$. These facts together show that the above inequality holds. Contradiction: $\outcome$ has a higher OWA-score, meaning that $\OWA$ cannot pick $\outcome^\prime$.
\end{proof}

Moreover, all of the rules considered here are computationally intractable.\footnote{Note that the problem of computing an optimal outcome w.r.t\ a given rule is an optimization problem, not a decision problem. We show that these optimization problems are $\NP$-hard, i.e., that we can not solve them in polynomial time unless $\Polytime = \NP$.} 
This follows from Proposition~\ref{prop:representation-np-hardness} and~\ref{prop:all-rules-find-perfect-outcomes}.

\begin{restatable}{proposition}{rulesAreIntractable} \label{prop:rules-are-intractable}
  Computing an outcome that is optimal with respect to a given OWA-rule or \maxCov\ is $\NP$-hard. 
\end{restatable}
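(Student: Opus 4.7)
The plan is to give a polynomial-time Turing reduction from \textsc{1-Representability} to the optimization problem associated with any fixed OWA rule (or with \maxCov), and then invoke Proposition~\ref{prop:representation-np-hardness}. The two earlier results combine almost mechanically: Proposition~\ref{prop:all-rules-find-perfect-outcomes} tells us that whenever an outcome of size $k$ that $1$-represents every voter exists, the OWA/\maxCov-optimal outcome is itself such an outcome. Thus, a hypothetical polynomial-time procedure for computing an optimal outcome would immediately decide \textsc{1-Representability}.

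Concretely, given an instance $(\absaf,k)$ of \textsc{1-Representability}, I would run the assumed polynomial-time oracle for computing an optimal outcome $\outcome$ of size at most $k$ under the chosen rule on the ABSAF $\absaf$. Then, in polynomial time, I would iterate over the voters $i \in \voters$ and check whether $\rep_i(\outcome) = 1$ for all of them, which is straightforward since both $|\outcome|$ and each $|\ballot{i}|$ are polynomially bounded. I would answer ``yes'' precisely when every voter is $1$-represented by $\outcome$. Correctness follows directly from Proposition~\ref{prop:all-rules-find-perfect-outcomes}: if some outcome of size $k$ that $1$-represents everyone exists, then the returned $\outcome$ must also $1$-represent all voters; conversely, if $\outcome$ does, then trivially such an outcome exists.

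This reduction takes polynomial time plus a single call to the optimization oracle, so a polynomial-time algorithm for computing an OWA- or \maxCov-optimal outcome would yield a polynomial-time algorithm for \textsc{1-Representability}, contradicting Proposition~\ref{prop:representation-np-hardness} (unless $\Polytime = \NP$). Hence the optimization problem is $\NP$-hard for every OWA rule and for \maxCov.

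There is no real technical obstacle here; the only minor care-point is to make sure the argument is phrased as a Turing reduction (since we are reducing a decision problem to an optimization problem), and to note that the statement of Proposition~\ref{prop:all-rules-find-perfect-outcomes} covers \emph{every} OWA rule with $\weight_1 > 0$, which is exactly the class of rules in the claim. The same argument applies verbatim with $\repDef$ in place of $\rep$ if one wishes to cover core-representation, using the corresponding core-analogue of Proposition~\ref{prop:all-rules-find-perfect-outcomes}.
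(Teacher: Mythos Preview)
Your proposal is correct and follows essentially the same approach as the paper: both reduce \textsc{1-Representability} to the optimization problem via Proposition~\ref{prop:all-rules-find-perfect-outcomes}, then invoke Proposition~\ref{prop:representation-np-hardness}. The paper's version is marginally more concrete in that it explicitly uses the single-voter, single-argument instance from the hardness proof of Proposition~\ref{prop:representation-np-hardness} (so that $\rep_1(\outcome)\in\{0,1\}$), but your general formulation is equally valid.
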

\begin{proof}
	Recall that deciding whether all voters in a given ABSAF can be $1$-represented is $\NP$-hard even if there is only a single voter that approves a single argument, i.e., $\ballot{1} = \{x\}$ (cf.\ proof of Proposition~\ref{prop:representation-np-hardness}). 
	An outcome $\outcome$ either $1$-represents the voter (if $x \in \pov$ for some $\pov \in \outcome$) or $0$-represents the voter (otherwise). 
	By Proposition~\ref{prop:all-rules-find-perfect-outcomes}, all considered rules will return an outcome that 1-represents the voter if such an outcome exists. Thus, the rules cannot be computed in polynomial time unless $\Polytime = \NP$. 
\end{proof}

While intractable, $\OWA(\absaf)$ and $\maxCov$ rules are FPT: 
analogously to Proposition~\ref{prop:core-representation-is-fpt}, we can first enumerate all preferred extensions in $O^*(3^{|\args|/3})$ time and then enumerate all outcomes of size~$k$ in $O^*({m \choose k}) \subseteq O^*(m^k)$ time, where $m = |\pref(\af)|$, allowing us to simply choose the best outcome w.r.t.\ the given rule. 

In practice, however, a runtime of $O^*({m \choose k})$ can be impractical, as we will see in Section~\ref{sec:experiments}. Thus, we introduce greedy variants for all of the above rules as follows. Consider first the greedy variant $\greedOWA$ of a rule $\OWA$. Assuming that we have already selected $\ell$ viewpoints $\pov_1,\ldots,\pov_\ell$, we pick the $(\ell+1)$-th as follows:
\[ \pov_{\ell+1} \in \argmax_{\pov\in\pref(\af)\setminus\{\pov_1,\ldots,\pov_\ell\}} \vec{\weight}\cdot\vec{s}(\{\pov_1,\ldots,\pov_\ell,\pov\}) . \]
We stop as soon as $k$ points of view have been selected. The greedy variant of $\maxCov$, called $\greedCov$, is defined analogously. 
Observe that \greedCov\ approximates \maxCov\ with a factor of $1 - \nicefrac{1}{e}$, which follows directly from the approximation guarantee of the greedy algorithm for the Maximum Coverage Problem~\citep{Hochbaum1997}.

The greedy rules are also intractable, just like their non-greedy variants (cf. Proposition~\ref{prop:rules-are-intractable}). The proof is analogous.

\begin{restatable}{proposition}{greedyRulesAreIntractable} \label{prop:greedy-rules-are-intractable}
	Computing an outcome that is optimal with respect to a given greedy OWA-rule or \greedCov\ is $\NP$-hard. 
\end{restatable}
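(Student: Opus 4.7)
The plan is to recycle the reduction used in the proof of Proposition~\ref{prop:rules-are-intractable}. Starting from an arbitrary instance $(\af,x)$ of credulous acceptance under preferred semantics, I would build the ABSAF with a single voter $\voters=\{1\}$, singleton ballot $\ballot{1}=\{x\}$, and set $k=1$. The observation to exploit is that the intractability argument does not genuinely rely on the global optimisation character of the rules: it only needs that, on this particular input, the rule returns a $1$-representing outcome whenever one exists.

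For $k=1$ the greedy rules effectively coincide with their non-greedy counterparts, since they pick a single $\pov \in \pref(\af)$ maximising the respective score in one step. On this input, $\rep_1(\pov) \in \{0,1\}$ for every $\pov$, with $\rep_1(\pov)=1$ precisely when $x \in \pov$. For any admissible weight vector $\vec{\weight}$ with $\weight_1 > 0$, the OWA-score of $\{\pov\}$ equals $\weight_1 \cdot \rep_1(\pov)$, so $\greedOWA$ selects some extension containing $x$ whenever such an extension exists; the same reasoning applies to $\greedCov$, which on a one-element outcome simply counts whether $x \in \pov$. Hence the greedy rule outputs a $1$-representing outcome iff $x$ is credulously accepted in $\af$ under preferred semantics, which is $\NP$-hard by \citep{DvorakDunne2018}.

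I do not foresee a substantive obstacle, since the argument is almost verbatim that of Proposition~\ref{prop:rules-are-intractable}. The main subtlety worth flagging is that the statement of Proposition~\ref{prop:all-rules-find-perfect-outcomes} concerns the non-greedy rules, so instead of invoking it as a black box I would verify the $1$-representation property by hand, as sketched above. The arbitrary tiebreaking assumed for the rules is inconsequential here, because every extension tied at the optimum already contains $x$ and thus $1$-represents the voter.
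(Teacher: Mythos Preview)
Your proposal is correct and matches the paper's own argument, which simply states that the proof is analogous to that of Proposition~\ref{prop:rules-are-intractable}. Your observation that for $k=1$ the greedy rules coincide with their non-greedy counterparts, together with the direct verification of the $1$-representation property (in lieu of invoking Proposition~\ref{prop:all-rules-find-perfect-outcomes}), is exactly the unpacking the paper leaves implicit.
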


However, we can improve upon the FPT-algorithm for the non-greedy rules since we do not need to enumerate all outcomes. 
Rather, it suffices to enumerate all points of view whenever we pick a new point of view in the greedy procedure. 
Thus, we can first enumerate all preferred extensions in $O^*(3^{|\args|/3})$ time and then execute the greedy procedure in $O^*(m k)$, where $m = |\pref(\af)|$. 

If we assume that we are given the preferred extensions, e.g., by precomputation via powerful argumentation solvers~\citep{egly2008aspartix,NiskanenJ20a}, we have a runtime of $O^*(m k)$ for the greedy rules.
This is polynomial in $m$ since an outcome can contain at most $m$ viewpoints, i.e., $k \leq m$. 
Assuming $\Polytime \neq \NP$, such an FPT-algorithm does not exist for the main non-greedy OWA-rules.

\begin{restatable}{proposition}{greedyIsFasterThanExactMethods} \label{prop:greedy-normal-complexity}
  Assuming $\Polytime \neq \NP$, there is no algorithm that, given an ABSAF $\absaf = (\af,\voters,\ballots)$ and given $\pref(\af)$, returns an outcome for $\absaf$ that is optimal with respect to the Utilitarian, Egalitarian, or Harmonic rule in polynomial time with respect to $m = |\pref(F)|$.
\end{restatable}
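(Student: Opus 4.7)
The plan is to reduce an $\NP$-hard problem to computing an optimal outcome for the given rule, in such a way that the constructed ABSAF has exactly $m$ preferred extensions and both the framework and the explicit list $\pref(\af)$ can be produced in polynomial time. A polynomial-in-$m$ algorithm for the rule would then decide the $\NP$-hard problem in polynomial time, contradicting $\Polytime\neq\NP$. I would reduce from \textsc{MaxCover} for the Utilitarian and Harmonic cases, and from \textsc{SetCover} for the Egalitarian case.

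Given sets $S_1,\ldots,S_m\subseteq[n]$ and a budget $k$, construct $\af = (\args,\att)$ with $\args = \{a_i,b_i : i\in[n]\} \cup \{t_j : j\in[m]\} \cup \{u\}$ and attacks: $a_i$ and $b_i$ mutually attack for every $i$; the $t_j$'s pairwise mutually attack; $t_j$ attacks $b_i$ if $i\in S_j$ and $a_i$ if $i\notin S_j$; $u$ attacks every $a_i$ and $b_i$; and every $t_j$ attacks $u$. Set $\voters=[n]$ with singleton ballots $\ballot{i}=\{a_i\}$. The key structural claim is that $\pref(\af) = \{\pi_1,\ldots,\pi_m\}$, where
\[\pi_j = \{t_j\} \cup \{a_i : i\in S_j\} \cup \{b_i : i\notin S_j\}.\]
One direction is a routine check that each $\pi_j$ is maximal admissible. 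For the converse, any preferred extension must contain some $t_j$, because the presence of any $a_i$ or $b_i$ in an admissible set forces $u$ to be defeated by that set, and only the $t_j$'s attack $u$; and the mutual attacks among the $t_j$'s allow at most one to appear. A direct check then shows that fixing $t_{j^*}$ uniquely determines the extension as $\pi_{j^*}$.

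Given this correspondence and the singleton ballots, $\rep_i(\outcome)=1$ iff $i \in \bigcup_{\pi_j\in\outcome} S_j$ and $0$ otherwise. Consequently, the Utilitarian score of $\outcome$ equals the coverage $\bigl|\bigcup_{\pi_j\in\outcome} S_j\bigr|$; the Harmonic score, being the partial harmonic sum $H_n-H_{n-t}$ when $t$ voters are covered, is strictly increasing in coverage; and the Egalitarian score equals $1$ iff $\bigcup_{\pi_j\in\outcome} S_j=[n]$. Optimizing any of these three rules over outcomes of size $k$ thus solves the corresponding \textsc{MaxCover} or \textsc{SetCover} instance, yielding the desired contradiction.

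The main technical obstacle is establishing the structural claim $\pref(\af) = \{\pi_1,\ldots,\pi_m\}$; the auxiliary argument $u$ together with the attacks from every $t_j$ to $u$ is introduced precisely to rule out preferred extensions containing no $t_j$, and a careful case analysis of the attack structure is needed to confirm that no spurious preferred extensions remain.
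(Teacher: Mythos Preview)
Your proof is correct and follows the same high-level strategy as the paper: reduce from a covering-type problem so that preferred extensions correspond to sets and voters to elements, with $\rep_i(\outcome)$ taking only two possible values depending on whether $i$ is covered. The paper reduces from Chamberlin--Courant (equivalently \textsc{MaxCover}) for Utilitarian/Harmonic and from \textsc{HittingSet} (equivalently \textsc{SetCover}) for Egalitarian, so your source problems coincide with the paper's up to duality.

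The main difference is in the AF construction. The paper uses a far simpler gadget: the arguments are $C\cup C'$ where $C$ is a clique of mutual attacks and every $c'\in C'$ carries a self-loop, so $\pref(\af)=\{\{c\}:c\in C\}$ is immediate with no case analysis. Ballots are then padded with the self-attacking dummies to size $m$, giving $\rep_i(\outcome)\in\{0,\nicefrac{1}{m}\}$. This sidesteps the need for your auxiliary argument $u$ and the verification that no spurious preferred extensions arise. Your construction is correct (the $u$-gadget does exactly the job you describe), but the structural claim $\pref(\af)=\{\pi_1,\ldots,\pi_m\}$ requires the argument you sketch, whereas in the paper it is a one-line observation. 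What your route buys is singleton ballots with $\rep_i\in\{0,1\}$, which makes the Harmonic-score computation marginally cleaner; otherwise the two constructions are interchangeable for the stated result.
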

\begin{proof}
	For Utilitarian and Harmonic we consider approval based committee voting under the Chamberlin–Courant (CC) rule~\citep{ChamberlinCourant1983,abcbook}, where we are given a set of voters $N = \{1,\ldots,n\}$, a set $C$ of $m = |C|$ candidates, a set $B(i) \subseteq C$ of approved candidates for each $i \in N$, and a number $k$. The task is to find a subset $W \subseteq C$ of size $|W| = k$ that maximizes $|\{i \in N \colon B(i) \cap W \neq \emptyset\}|$. Computing an optimal $W \subseteq C$ under CC is $\NP$-hard~\citep{ProcacciaRZ2008}.  
	We construct the ABSAF $\absaf = (\af = (\args,\att),\voters,\ballots)$ with 
	\begin{itemize}
		\item $\args = C \cup C'$, where $C' = \{c' \mid c \in C\}$;
		\item $\att = \{(c,d) \mid c,d \in C \text{ and } c \neq d\} \cup \{(c',c') \mid c' \in C'\}$;
		\item $\ballot{i} = B(i) \cup \{c' \mid c \in C \setminus B(i)\}$ for every $i \in N$.
	\end{itemize}
	Note that $\pref(\af) = \{\{c\} \mid c \in C\}$ and that $|\pref(\af)| = m$. Moreover, $|\ballot{i}| = m$ for every $i \in N$. Thus, for every $\outcome \subseteq \pref(\af)$, if there is $\{c\} \in \outcome$ such that $c \in \ballot{i}$, we have $\rep_i(\outcome) = \nicefrac{1}{m}$ and $\repDef_i(\outcome) = 1$. Otherwise, $\rep_i(\outcome) = \repDef_i(\outcome) = 0$. 
	Hence, $W \subseteq C$ maximizes $|\{i \in N \colon B(i) \cap W \neq \emptyset\}|$ iff $\outcome = \{\{c\} \mid c \in W\}$ is optimal with respect to Utilitarian/Harmonic. Now, if there was an algorithm that could compute $\outcome$ in polynomial time with respect to $m$, then we could compute $W$ in polynomial time, which would imply $\Polytime = \NP$.  
	
	For Egalitarian, let $(C,S_1,\ldots,S_\ell,k)$ be an instance of the prototypical $\NP$-complete Hitting Set problem~\citep{Karp1972}. Specifically, $C$ is a set of $m$ elements, $S_j \subseteq C$ for every $1 \leq j \leq \ell$, and $k$ is a natural number. $H \subseteq C$ is a hitting set iff $H \cap S_j \neq \emptyset$ for every $1 \leq j \leq \ell$. We must decide whether there is a hitting set $H$ of size $|H| \leq k$. 
	
	We construct the ABSAF $\absaf = (\af = (\args,\att),\voters,\ballots)$ with 
	\begin{itemize}
		\item $\args = C \cup C'$, where $C' = \{c' \mid c \in C\}$;
		\item $\att = \{(c,d) \mid c,d \in C \text{ and } c \neq d\} \cup \{(c',c') \mid c' \in C'\}$;
		\item $\ballot{i} = S_i \cup \{c' \mid c \in C \setminus S_i\}$ for every $i \in N$.
	\end{itemize}
	As above, $\pref(\af) = \{\{c\} \mid c \in C\}$. Therefore, for every $\outcome \subseteq \pref(\af)$, if there is $\{c\} \in \outcome$ such that $c \in \ballot{i}$ we have $\rep_i(\outcome) = \nicefrac{1}{m}$ and $\repDef_i(\outcome) = 1$. Otherwise, $\rep_i(\outcome) = \repDef_i(\outcome) = 0$. 
	
	Now $H \subseteq C$ is a hitting set iff $\min_{i \in \voters}\rep_i(\outcome) =  \nicefrac{1}{m}$ (resp.\ $\min_{i \in \voters}\repDef_i(\outcome) =  1$) for $\outcome = \{\{c\} \mid c \in H\}$. 
	Thus, if there was an algorithm that could compute an optimal outcome for Egalitarian in polynomial time with respect to $m$, then we could compute a minimal hitting set in polynomial time, which would imply $\Polytime = \NP$.
\end{proof}

We have now introduced several voting rules. Next, we will compare them axiomatically
and in numerical simulations.

\section{Justified Representation}

One of the key axioms of multiwinner voting is 
justified representation \citep{aziz2017justified}, which requires that each sufficiently large group
of voters sharing the same opinion is represented in the outcome. This is
also a natural desideratum in our setting. To formalize it, we first need 
to define what it means for a group to have the same opinion.

\begin{definition} \label{def:representable-groups}
	Let $\absaf=(\af,\voters,\ballots)$ be an ABSAF with $n = |\voters|$ voters. 
	We call a group of voters $\voters' \subseteq \voters$ \emph{$1$-representable} iff there is $\pov \in \pref(\af)$ such that for all $i \in \voters'$ we have $\rep_i(\pov) = 1$.
\end{definition}
 
If such a group contains
more than $\nicefrac{n}{k}$ voters, where $k$ is the number of preferred extensions we want to select, then this group, arguably, deserves to be fully represented. 

\begin{definition} \label{def:sjr}
	An outcome $\outcome \subseteq \pref(\af)$ of size $|\outcome|=k$ satisfies \emph{Strong Justified Representation} (SJR) iff for every $1$-representable group $\voters'\subseteq \voters$ of size $|\voters'| \geq \nicefrac{n}{k}$ there is a point of view $\pov \in \outcome$ such that $\rep_i(\pov) = 1$ for every $i \in \voters'$.
\end{definition}

Observe that Definitions~\ref{def:representable-groups} and~\ref{def:sjr} could be written using the notion of core-representation (see Definition~\ref{def:core-representation}) instead. All results that we present in this section hold in either case. 

\begin{figure}[t]
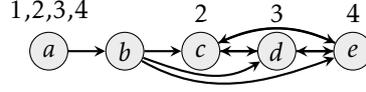

  \centering
  \tikz{
    \node[arg,label={above}:\text{1,2,3,4}] (a) at (0,0) {$a$};
    \node[arg] (b) at (1,0) {$b$};
    \node[arg,label={above}:2] (c) at (2,0) {$c$};
    \node[arg,label={above}:3] (d) at (3,0) {$d$};
    \node[arg,label={above}:4] (e) at (4,0) {$e$};
    \draw[attack] 
    (a) edge (b)
    (b) edge (c)
    (b) edge[out=-20,in=212] (d)
    (b) edge[out=-30,in=200] (e)
    (c) edge (d)
    (c) edge[bend left=25] (e)
    (d) edge (c)
    (d) edge (e)
    (e) edge[bend right=25] (c)
    (e) edge (d);
  }
  \caption{Counterexample SJR.}
  \label{fig:counterexample-SJR}
\end{figure}
\begin{figure}[t]
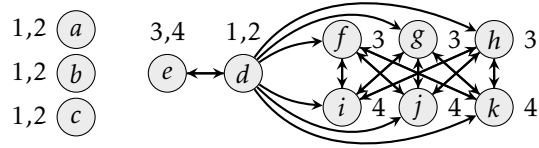

	\centering
	\tikz{
		\node[arg,label={left}:\text{1,2}] (a) at (0,1.58) {$a$};
		\node[arg,label={left}:\text{1,2}] (b) at (0,1) {$b$};
		\node[arg,label={left}:\text{1,2}] (c) at (0,0.42) {$c$};
		\node[arg,label={above}:\text{1,2}] (d) at (2.2,1) {$d$};
		\node[arg,label={above}:\text{3,4}] (e) at (1.2,1) {$e$};
		\node[arg,label={right}:\text{3}] (f) at (3.5,1.45) {$f$};
		\node[arg,label={right}:\text{3}] (g) at (4.5,1.45) {$g$};
		\node[arg,label={right}:\text{3}] (h) at (5.5,1.45) {$h$};
		\node[arg,label={right}:\text{4}] (i) at (3.5,0.55) {$i$};
		\node[arg,label={right}:\text{4}] (j) at (4.5,0.55) {$j$};
		\node[arg,label={right}:\text{4}] (k) at (5.5,0.55) {$k$};
		\draw[attack] 
		(d) edge (e)
		(e) edge (d)
		(d) edge[out=35,in=185] (f)
		(d) edge[out=45, in=150] (g)
		(d) edge[out=55, in=155] (h)
		(d) edge[out=-35,in=-185] (i)
		(d) edge[out=-45, in=-150] (j)
		(d) edge[out=-55, in=-155] (k)
		
		(f) edge (i)
		(i) edge (f)
		(f) edge (j)
		(j) edge (f)
		(f) edge (k)
		(k) edge (f)
		(g) edge (i)
		(i) edge (g)
		(g) edge (j)
		(j) edge (g)
		(g) edge (k)
		(k) edge (g)
		(h) edge (i)
		(i) edge (h)
		(h) edge (j)
		(j) edge (h)
		(h) edge (k)
		(k) edge (h);
	}
	\caption{Counterexample JR (OWA rules).}
	\label{fig:counterexample-OWA-JR}
\end{figure}

As can be seen quite easily,
if the approval sets of all voters are preferred extensions, then SJR can be satisfied.

\begin{restatable}{proposition}{SJRifVotesArePrefExtensions}
Let $\absaf=(\af,\voters,\ballots)$ be an ABSAF. If $\ballot{i} \in \pref(\af)$ for all $i \in \voters$, then for any $k \in \{1,\ldots,|\pref(\af)|\}$ we can find an outcome $\outcome$ of size $|\outcome| = k$ that satisfies SJR.
\end{restatable}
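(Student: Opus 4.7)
The plan is to exploit the rigidity that comes from assuming each ballot is itself a preferred extension. Since preferred extensions are subset-maximal admissible sets, they are pairwise incomparable under inclusion. Hence, for a voter $i$ with $A(i) \in \pref(F)$, the condition $\rep_i(\pi) = 1$ (i.e. $A(i) \subseteq \pi$) forces $A(i) = \pi$. So each voter can be $1$-represented by exactly one preferred extension, namely their own ballot.

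The first step is to use this observation to characterize $1$-representable groups: a group $\voters' \subseteq \voters$ is $1$-representable if and only if all voters in $\voters'$ share the same ballot, which is then necessarily the unique $\pi \in \pref(\af)$ that $1$-represents the entire group. Writing $n_\pi = |\{i \in \voters \mid \ballot{i} = \pi\}|$ for each $\pi \in \pref(\af)$, we have $\sum_{\pi \in \pref(\af)} n_\pi = n$.

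The second step is a simple counting argument. Let $P = \{\pi \in \pref(\af) \mid n_\pi \geq \nicefrac{n}{k}\}$ be the set of ``popular'' preferred extensions. If $|P| > k$, then $\sum_{\pi \in P} n_\pi > k \cdot \nicefrac{n}{k} = n$, contradicting $\sum_\pi n_\pi = n$. Hence $|P| \leq k$.

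The construction is then to let $\outcome$ consist of all extensions in $P$, padded with arbitrarily chosen preferred extensions from $\pref(\af) \setminus P$ until $|\outcome| = k$. This padding is possible exactly because $k \leq |\pref(\af)|$. To verify SJR, take any $1$-representable group $\voters'$ with $|\voters'| \geq \nicefrac{n}{k}$; by step one, all members share a common ballot $\pi$, and $n_\pi \geq |\voters'| \geq \nicefrac{n}{k}$, so $\pi \in P \subseteq \outcome$ and $\rep_i(\pi) = 1$ for every $i \in \voters'$. There is no real obstacle here: the only subtlety is confirming that the preferred-extension ballots force equality rather than mere containment, and that the pigeonhole bound $|P| \leq k$ matches exactly the size budget of $\outcome$.
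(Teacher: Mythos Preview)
Your proof is correct and follows essentially the same approach as the paper: both arguments hinge on the observation that preferred extensions are pairwise $\subseteq$-incomparable, so a voter with $\ballot{i}\in\pref(\af)$ is $1$-represented only by $\pi=\ballot{i}$, which forces every $1$-representable group to be a set of voters sharing a common ballot; a pigeonhole count then bounds the number of large groups by $k$. Your write-up is in fact a bit more careful than the paper's, since you make the padding step to reach $|\outcome|=k$ explicit and spell out the counting via $n_\pi$.
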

\begin{proof}
	Let $\absaf=(\af,\voters,\ballots)$ be an ABSAF with $n = |\voters|$ voters.
	For every two preferred extensions $\pov,\pov' \in \pref(\af)$ we know that $\pov \not \subset \pov'$ and vice versa. 
	Hence, if the approval set $\ballot{i}$ of a voter $i \in \voters$ is a preferred extension $\pov$, then $i$ is only 1-represented by $\pov$.  
	Now let $N_1$, $N_2$, \dots, $N_\ell$ be all 1-representable groups of size $\nicefrac{n}{k}$. 
	As every voter can only be 1-represented by one extension, these groups must be 
	disjoint. Since additionally we have $|N_j| \geq \nicefrac{n}{k}$
	it follows that $\ell \leq k$ holds and hence we can just select for every $N_j$
	the preferred extension that 1-represents the voters in $N_j$. 
\end{proof}

Unfortunately, once we drop this unreasonably strong assumption, the result does not hold. 

\begin{restatable}{proposition}{SJRcanNotBeGuaranteed}
  It cannot be guaranteed that, given an ABSAF $\absaf=(\af,\voters,\ballots)$ and $k \in \{1,\ldots,|\voters|-1\}$, there is an outcome $\outcome$ of size $|\outcome| = k$ that satisfies SJR (even if votes are assumed to be admissible).
\end{restatable}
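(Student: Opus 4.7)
The plan is to prove this by exhibiting the counterexample depicted in Figure~\ref{fig:counterexample-SJR} together with $k=2$. The AF $\af$ has arguments $\{a,b,c,d,e\}$; $a$ attacks $b$, $b$ attacks each of $c,d,e$, and $c,d,e$ form a mutually attacking triangle. The four voters have ballots $\ballot{1}=\{a\}$, $\ballot{2}=\{a,c\}$, $\ballot{3}=\{a,d\}$, $\ballot{4}=\{a,e\}$. The first step is to verify that all four ballots are admissible: $a$ has no attackers, and for $\{a,x\}$ with $x \in \{c,d,e\}$, the attacker $b$ of $x$ is attacked by $a$, while the other two triangle members are attacked by $x$ itself.

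Next I would characterise $\pref(\af)$. Because $a$ has no attackers, $a$ must belong to every preferred extension, which in turn forces $b$ out of every preferred extension. The arguments $c,d,e$ are then defended against $b$ by $a$, but since they pairwise attack each other, exactly one of them can be included. Hence $\pref(\af) = \{\{a,c\},\{a,d\},\{a,e\}\}$.

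With $n=4$ and $k=2$, the SJR threshold is $\nicefrac{n}{k} = 2$. I would then identify three $1$-representable groups of size $2$: $\{1,2\}$ is fully represented by $\{a,c\}$, $\{1,3\}$ by $\{a,d\}$, and $\{1,4\}$ by $\{a,e\}$, since in each case both voters approve only arguments contained in the corresponding preferred extension. SJR would therefore require the outcome $\outcome$ to contain a viewpoint fully representing each of these three groups. However, $\{a,c\}$ is the only preferred extension containing both $a$ and $c$, and similarly for the other two groups. Thus SJR forces $\{a,c\},\{a,d\},\{a,e\} \in \outcome$, which is incompatible with $|\outcome| = 2$.

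The proof is mostly a matter of picking a clever small instance rather than surmounting a technical obstacle; the key design insight is to have a single ``consensus'' argument $a$ common to every ballot, together with several mutually exclusive ``faction'' arguments, so that the pairwise groups each deserve full representation but no two of them can be satisfied simultaneously by the same preferred extension. The only care needed is in checking admissibility of the ballots and confirming that no preferred extension other than the three listed exists, both of which are immediate from the attack structure.
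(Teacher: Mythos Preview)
Your proposal is correct and uses exactly the same counterexample and argument as the paper: the same AF, the same four ballots, the same three preferred extensions, and the same three $1$-representable pairs $\{1,2\},\{1,3\},\{1,4\}$ that cannot all be served by an outcome of size $k=2$. You spell out the admissibility check and the characterisation of $\pref(\af)$ in a bit more detail than the paper does, but the approach is identical.
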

\begin{proof}
	Let $\absaf=(\af,\voters,\ballots)$ be the ABSAF shown in Figure~\ref{fig:counterexample-SJR}. There are three preferred extensions, namely $\pref(\af) = \{\{a,c\}, \allowbreak \{a,d\}, \allowbreak \{a,e\}\}$. However, we cannot pick an outcome $\outcome \subseteq \pref(\af)$ of size $k = 2$ that satisfies SJR: Note that the voter groups $\{1,2\}, \allowbreak \{1,3\}, \allowbreak \{1,4\}$ are all $1$-representable. Moreover, these groups have a size of 2 while also $\nicefrac{n}{k} = \nicefrac{4}{2} = 2$.
	However, if we pick the outcome $\outcome_1 = \{\{a,c\},\{a,d\}\}$ we have that $\rep_4(\pov) = 0.5$ despite voter~4 being part of a $1$-representable group of size 2. Analogously if we pick $\outcome_2 = \{\{a,c\},\{a,e\}\}$ or $\outcome_3 = \{\{a,d\},\{a,e\}\}$.
\end{proof}

This motivates us to introduce the following weakening of SJR, where we only require that at least one voter of every sufficiently large $1$-representable group must be 1-represented. Analogous restrictions have also been proposed in the context of multiwinner voting~\citep{abcbook}
and Participatory Budgeting~\citep{rey2023computational}.

\begin{definition} \label{def:jr}
	An outcome $\outcome \subseteq \pref(\af)$ with $|\outcome|=k$ satisfies \emph{Justified Representation} (JR) iff for every $1$-representable group ${\voters'\subseteq \voters}$ with $|\voters'| \geq \nicefrac{n}{k}$ there is $\pov \in \outcome$ such that $\rep_i(\pov) = 1$ for some $i \in \voters'$.
\end{definition}

In order for an outcome $\outcome$ of size $k$ to satisfy JR, the set of voters in $\voters'$ that are not 1-represented by $\outcome$ must be smaller than $\nicefrac{n}{k}$. Indeed, if $\nicefrac{n}{k}$ or more voters of a $1$-representable group $\voters'$ were not 1-represented by $\outcome$, then
$\voters'' = \{i \in \voters' \mid \forall \pov \in \outcome \; \rep_i(\pov) < 1\}$
would be a $1$-representable group of size $\nicefrac{n}{k}$ that violates JR.

Fortunately, in contrast to SJR, JR can always be satisfied. Indeed, in contrast to multiwinner voting
it can be satisfied by just maximizing the number of 1-represented voters.

\begin{restatable}{theorem}{maxCovJR} \label{thm:maxCov-JR}
\maxCov{} and \greedCov{} satisfy JR. 
\end{restatable}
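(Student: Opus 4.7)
For both \maxCov\ and \greedCov\ I would argue by contradiction. Assume the rule returns an outcome $\outcome$ of size $k$ that violates JR. Then there is a $1$-representable group $\voters' \subseteq \voters$ with $|\voters'| \geq n/k$ no voter of which is $1$-represented by $\outcome$, witnessed by a preferred extension $\pov^*$ with $\rep_i(\pov^*) = 1$ for all $i \in \voters'$. Writing $V$ for the set of voters $1$-represented by $\outcome$, we have $\voters' \cap V = \emptyset$, so $V \neq \voters$ and $|V| < n$.

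For \maxCov, the plan is to exhibit a strictly better outcome $\outcome'$ by swapping one extension of $\outcome$ for $\pov^*$. To decide which extension to discard, I would apply a pigeonhole argument: since $V$ is the union of the $k$ sets $V_j = \{i \in \voters : \rep_i(\pov_j) = 1\}$, some $V_{j^*}$ has size at most $|V|/k$. The modified outcome $\outcome' = (\outcome \setminus \{\pov_{j^*}\}) \cup \{\pov^*\}$ still $1$-represents every voter in $V \setminus V_{j^*}$ and additionally every voter in $\voters'$. Since $\voters' \cap V = \emptyset$ the two contributions are disjoint, so $\outcome'$ $1$-represents at least $|V| - |V|/k + |\voters'|$ voters, which is strictly larger than $|V|$ because $|\voters'| \geq n/k > |V|/k$. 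This contradicts $\outcome \in \argmax$.

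For \greedCov, the plan is to track the set $V_\ell$ of voters $1$-represented after the $\ell$-th greedy pick. Since $\voters'$ is disjoint from $V_k$, it is disjoint from every $V_\ell$; in particular $\pov^*$ is never already selected and remains a legal candidate whose selection at step $\ell$ would add $|\voters' \setminus V_{\ell-1}| = |\voters'| \geq n/k$ newly $1$-represented voters. By greedy optimality the actual pick $\pov_\ell$ matches or exceeds that marginal gain, and summing the $k$ disjoint per-step gains forces $|V_k| \geq n$, contradicting $\voters' \cap V_k = \emptyset$ together with $\voters' \neq \emptyset$.

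The main obstacle is the \maxCov\ half, where the challenge is to control how many $1$-represented voters we lose when removing an existing extension; this is precisely what the pigeonhole choice of $\pov_{j^*}$ handles, balancing the loss $|V_{j^*}| \leq |V|/k$ against the gain $|\voters'| \geq n/k$. The \greedCov\ case is comparatively clean once one observes that the persistent availability of $\pov^*$ as an unused candidate lower-bounds the marginal gain at every step.
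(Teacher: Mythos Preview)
Your \greedCov\ argument is correct and in fact a bit slicker than the paper's: you lower-bound the marginal gain of every greedy step by $n/k$ (since $\pov^*$ is always an unused candidate) and sum to force $|V_k| \geq n$, whereas the paper first partitions the $1$-represented voters by the earliest extension covering them, locates one round whose block has size below $n/k$, and derives the contradiction at that single round.

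Your \maxCov\ argument, however, has a gap in the pigeonhole step. From $V = V_1 \cup \cdots \cup V_k$ you \emph{cannot} conclude that some $V_{j^*}$ has size at most $|V|/k$: take $k=2$ and $V_1 = V_2 = V$. The union bound points the other way (some $V_j$ has size at least $|V|/k$). What you actually need to control is the number of voters \emph{lost} when $\pov_{j^*}$ is removed, namely $L_{j^*} := V_{j^*} \setminus \bigcup_{j\neq j^*} V_j$. These unique-coverage sets are pairwise disjoint subsets of $V$, so $\sum_j |L_j| \leq |V| < n$, and now pigeonhole does give some $|L_{j^*}| < n/k \leq |\voters'|$; with this choice the swap yields at least $|V| - |L_{j^*}| + |\voters'| > |V|$ covered voters, as you wanted. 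The paper achieves the same effect by partitioning $V$ according to the \emph{first} index $j$ with $\rep_i(\pov_j)=1$ and choosing the smallest block. Either fix repairs your swap argument without changing its overall structure.
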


\begin{proof}
	Let $\outcome$ be the outcome of either \maxCov{} and \greedCov{} with size $|\outcome| = k$.
	Assume that there is a $1$-representable group $\voters'$ of size at least $\nicefrac{n}{k}$ that is
	represented by $\pov'$ such that no voter in $\voters'$ is 1-represented in $\outcome$.
	Thus, for the set of 1-represented voters
	\[N_{\outcome}^1:= \{i \in \voters \mid \exists \pov \in \outcome \; \rep_i(\pov) = 1\}\]
	we have $|N_{\outcome}^1| \leq n - |\voters'| \leq n - \nicefrac{n}{k}$.
	Now let $\pov_1, \pov_2, \dots, \pov_k$ be an enumeration of the elements of $\outcome$.
	In the case of \greedCov{} let this be the sequence in which the elements of $\outcome$
	are picked, in the case of \maxCov{} let the enumeration be arbitrary.
	Now we partition $N_{\outcome}^1$ into the following sets 
	\[N_{\pov_j}^1 := \{i \in N_{\outcome}^1 \mid \rep_i(\pov_j) = 1 \land \forall \ell < j \;
	\rep_i(\pov_\ell) < 1\}.\]
	In other words, $N_{\pov_j}^1$ is the set of voters for which $\pov_j$ is the minimal
	extension 1-representing them. As this is a partition we have 
	\[\sum_{j=1}^k |\voters^1_{\pov_j}| = |\voters^1_{\outcome}| \leq n - \frac{n}{k} < n\]
	and therefore there must be $\ell \leq k$ for which
	$|\voters^1_{\pov_\ell}| < \nicefrac{n}{k} \leq |\voters'|$ holds.
	Assume now that $\outcome$ was chosen by \maxCov. Observe that
	\[|N_{\outcome}^1| = \sum_{j=1}^k |\voters^1_{\pov_j}|  <
	\left(\sum_{j=1}^k |\voters^1_{\pov_j}|\right) - |\voters^1_{\pov_\ell}| + |\voters'| \leq
	|N_{\outcome\cup \{\pov'\} \setminus \{\pov_\ell\}}^1|\]
	where the last inequality follows from the fact that no voter in $\voters'$ is represented by 
	$\outcome$. However, this contradicts the assumption that $\outcome$ was chosen by
	\maxCov.
	Assume now that $\outcome$ was chosen by \greedCov. Then, in round $\ell$, we have
	\[|N_{\bigcup_{j=1}^\ell \pov_j}^1| = \sum_{j=1}^\ell |\voters^1_{\pov_j}|  <
	\left( \sum_{j=1}^{\ell-1} |\voters^1_{\pov_j}| \right) + |\voters'| =
	|N_{(\bigcup_{j=1}^{\ell-1} \pov_j) \cup \{\pov'\}}^1|\]
	where the last inequality follows again from the fact that no voter
	in $\voters'$ is represented by $\outcome$. However, this contradicts 
	the assumption that $\pov_\ell$ was chosen by \greedCov{} in round $\ell$.
\end{proof}

In contrast, rules that do not explicitly maximize the number of 1-represented agents do not satisfy JR.


\begin{restatable}{proposition}{OWAcanNotGuaranteeJR}\label{prop:OWA-JR}
  No OWA or greedy OWA rule is guaranteed to return an outcome that satisfies JR, irrespective of the tie-breaking used, and even if votes are admissible.
\end{restatable}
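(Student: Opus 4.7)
The plan is to exhibit the ABSAF in Figure~\ref{fig:counterexample-OWA-JR} as a single uniform counterexample that defeats every non-increasing weight vector $\vec{\weight}$ with $\weight_1>0$ simultaneously, and to handle the greedy case by a near-identical dominance argument. The first step is to read off $\pref(\af)$: the arguments $a,b,c$ are unattacked (hence lie in every extension), $d$ and $e$ are in mutual conflict, and once $e$ is accepted the two mutually-attacking groups $\{f,g,h\}$ and $\{i,j,k\}$ leave exactly two maximal admissible completions. This gives $\pov_1=\{a,b,c,d\}$, $\pov_2=\{a,b,c,e,f,g,h\}$, and $\pov_3=\{a,b,c,e,i,j,k\}$; each ballot coincides with one of these, so all votes are admissible.

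I would then fix $k=2$, so $\nicefrac{n}{k}=2$, and identify the unique $1$-representable group of size at least $2$: it is $\voters'=\{1,2\}$, because only $\pov_1$ contains $\ballot{1}=\ballot{2}$, while no single extension contains both $\ballot{3}$ and $\ballot{4}$. Thus JR forces any JR-compliant outcome of size $2$ to contain $\pov_1$, and the only JR-violating outcome of that size is $\outcome_C=\{\pov_2,\pov_3\}$; the admissible JR-compliant alternatives are $\outcome_A=\{\pov_1,\pov_2\}$ and $\outcome_B=\{\pov_1,\pov_3\}$.

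The crux of the argument is a single algebraic inequality. Computing representation scores produces sorted vectors $\vec{s}(\outcome_A)=\vec{s}(\outcome_B)=(\nicefrac{1}{4},1,1,1)$ and $\vec{s}(\outcome_C)=(\nicefrac{3}{4},\nicefrac{3}{4},1,1)$, so that
\[
\vec{\weight}\cdot\vec{s}(\outcome_C)-\vec{\weight}\cdot\vec{s}(\outcome_A)=\tfrac{1}{2}\,\weight_1-\tfrac{1}{4}\,\weight_2,
\]
which is strictly positive whenever $\weight_1\geq\weight_2$ and $\weight_1>0$. Hence every OWA rule picks $\outcome_C$ as its unique optimum, regardless of tie-breaking, violating JR.

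For the greedy variants I would check two rounds. Comparing $\vec{s}(\{\pov_1\})=(0,0,1,1)$ with $\vec{s}(\{\pov_2\})=\vec{s}(\{\pov_3\})=(\nicefrac{1}{4},\nicefrac{3}{4},\nicefrac{3}{4},1)$ yields a dominance of the same shape as above, forcing round~one to pick $\pov_2$ or $\pov_3$. In round~two, extending by $\pov_1$ versus by the other $e$-extension reduces exactly to the $\outcome_A$-versus-$\outcome_C$ comparison already settled, so the greedy choice is again $\outcome_C$. The main technical obstacle throughout is to make both inequalities \emph{strict} for arbitrary non-increasing $\vec{\weight}$; this is precisely where the standing assumption $\weight_1>0$ is used, and once it is exploited the proof is mechanical.
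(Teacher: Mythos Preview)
Your proposal is essentially identical to the paper's proof: the same ABSAF, the same $k=2$, the same identification of $\{1,2\}$ as the only $1$-representable group of size $\geq 2$, the same sorted vectors, and the same key inequality (your $\tfrac{1}{2}\weight_1-\tfrac{1}{4}\weight_2>0$ is the paper's $2\weight_1-\weight_2>0$ rescaled). One small slip: you write that ``each ballot coincides with one of these'' extensions, but $\ballot{3}=\{e,f,g,h\}$ and $\ballot{4}=\{e,i,j,k\}$ are proper subsets of $\pov_2$ and $\pov_3$, not equal to them; they are admissible in their own right (since $e$ counter-attacks $d$, and $\{f,g,h\}$ resp.\ $\{i,j,k\}$ defend themselves against the opposite block), so the admissibility claim stands, but the justification you give is off.
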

\begin{proof}
	Fix a weight vector $\vec{\weight}$ and let $k=2$. Consider the ABSAF $\absaf=(\af,\voters,\ballots)$ shown in Figure~\ref{fig:counterexample-OWA-JR}. $F$ has preferred extensions
	$\pov_1 = \{a,b,c,d\}$,
	$\pov_2 = \{a,b,c,e,f,g,h\}$, and
	$\pov_3 = \{a,b,c,e,i,j,k\}$.
	There are $n = 4$ voters with approvals 
	$\ballot{1} =\ballot{2} = \pov_1$,
	$\ballot{3} = \{e,f,g,h\}$, and
	$\ballot{4} = \{e,i,j,k\}$.
	Note that there is a $1$-representable group of size at least $\nicefrac{n}{k} = 2$, namely $\voters' = \{1,2\}$. Thus, to satisfy JR, we need to pick an outcome containing $\pov_1$.
	
	We first show that the outcome returned by the $\OWA$ rule does not satisfy JR.
	It can be verified that
	the possible sorted representation vectors are
	$\vec{s}_1 = \vec{s}(\{\pov_1,\pov_2\}) = \vec{s}(\{\pov_1,\pov_3\}) = (\nicefrac{1}{4}, 1, 1, 1)$ and
	$\vec{s}_2 = \vec{s}(\{\pov_2,\pov_3\}) = (\nicefrac{3}{4}, \nicefrac{3}{4}, 1, 1)$.
	Therefore,
	\begin{align*}
	\vec{s}_2\cdot \vec{\weight}&>\vec{s}_1\cdot \vec{\weight} && \iff \\
	\nicefrac{3}{4}\weight_1+\nicefrac{3}{4}\weight_2+\weight_3+\weight_4 & > \nicefrac{1}{4}\weight_1+\weight_2+\weight_3+\weight_4 && \iff \\
	2\weight_1 -\weight_2 & > 0.
	\end{align*}
	The last inequality holds by the fact that $\vec{\weight}$ is a non-increasing vector and $w_1>0$. Thus, outcome $\{\pov_2,\pov_3\}$ is chosen, failing JR.
	
	We now show that $\greedOWA$ also fails JR. In the first round, we have the sorted representation vectors
	$\vec{s}_3 = \vec{s}(\{\pov_1\}) = (0, 0, 1, 1)$ and
	$\vec{s}_4 = \vec{s}(\{\pov_2\}) = \vec{s}(\{\pov_3\}) = (\nicefrac{1}{4}, \nicefrac{3}{4}, \nicefrac{3}{4}, 1)$.
	Consequently,
	\begin{align*}
	\vec{s}_4\cdot \vec{\weight} > \vec{s}_3\cdot \vec{\weight} \iff 
	\weight_1+3\weight_2-\weight_3 > 0.
	\end{align*}
	Again, the last inequality holds by definition of $\vec{\weight}$. Hence, the greedy rule will either pick $\{\pov_2\}$ or $\{\pov_3\}$. Assume w.l.o.g. that $\{\pov_2\}$ is chosen. Then, in the second (and final) round, the rule will only compare outcomes containing $\pov_2$, namely $\{\pov_1,\pov_2\}$ and $\{\pov_2,\pov_3\}$. However, we have shown above that the latter has a higher $\OWA$ score. Thus, $\{\pov_2,\pov_3\}$ is chosen, failing JR.
\end{proof}

Theorem~\ref{thm:maxCov-JR} and Proposition~\ref{prop:OWA-JR} together show that \maxCov{} (resp.\ \greedCov{}) does not belong to the class of OWA rules (resp.\ greedy OWA rules) and
that \maxCov{} outperforms all OWA rules with respect to the key 
axiom of justified representation.

\section{Experiments} \label{sec:experiments}

\begin{figure*}
	\centering
	\begin{subfigure}[b]{0.45\textwidth}
		\centering
		\includegraphics[width=\textwidth]{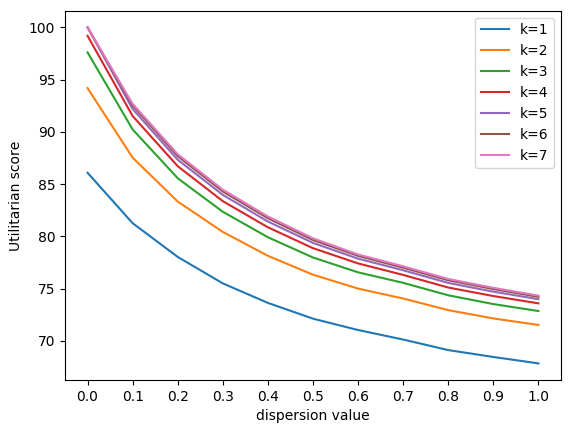}
		\caption{Varying $k$ experiment.}
		\label{fig:varyingExp}
	\end{subfigure}
	\hfill
	\begin{subfigure}[b]{0.45\textwidth}
		\centering
		\includegraphics[width=\textwidth]{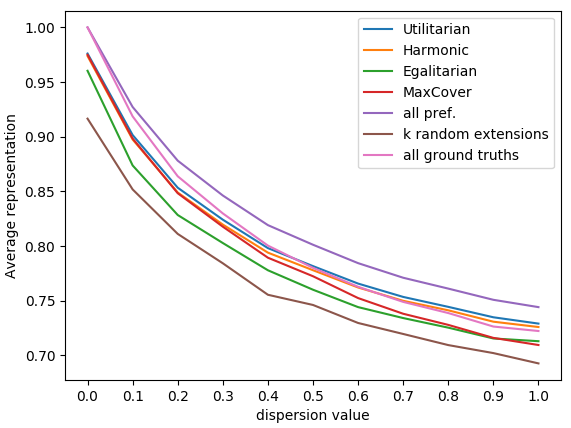}
		\caption{Average representation.}
		\label{fig:avgExp}
	\end{subfigure}
	\begin{subfigure}[b]{0.45\textwidth}
		\centering
		\includegraphics[width=\textwidth]{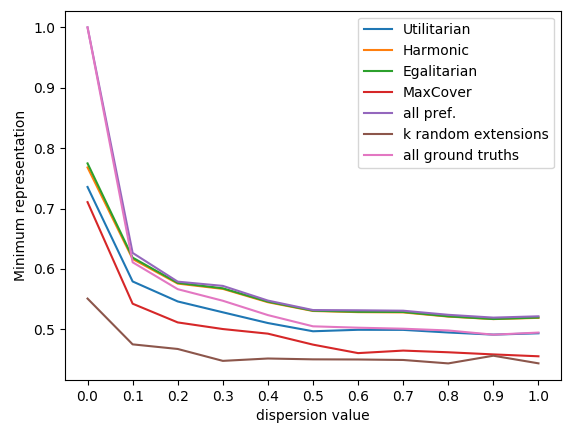}
		\caption{Minimal representation.}
		\label{fig:minExp}
	\end{subfigure}
	\hfill
	\begin{subfigure}[b]{0.45\textwidth}
		\centering
		\includegraphics[width=\textwidth]{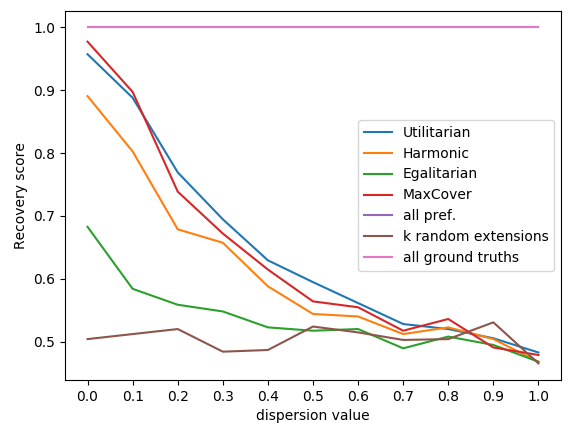}
		\caption{Recovery score.}
		\label{fig:recExp}
	\end{subfigure}
	\caption{Results of the experiments.}
	\label{fig:exps}
\end{figure*}

In this section, we present our experiments. 
We test our rules on a concrete scenario, and 
find that, on our data,  
the Utilitarian and Harmonic rule
achieve good representation, 
and that their greedy versions scale to practically relevant inputs sizes.

\subsubsection*{Setup.} We implemented our framework in Python. To compute the preferred extensions, we use \texttt{ASPARTIX} \citep{egly2008aspartix}. We ran our experiments on a device equipped with two Intel Xeon E5-2650 v4 (12 cores @ 2.20GHz) CPUs and 256GB of RAM.


Since there is no dataset of ABSAFs, we generate our data. Note that we cannot take advantage of datasets consisting of plain-text discussions with user approvals,\footnote{See, e.g., the ``Computational Democracy Project'', from which we took Example~\ref{example:absaf}: \url{https://compdemocracy.org/}} as there are still challenges to be overcome when it comes to the connection between argument mining from text and formal Dung-style argumentation~\citep{BaumannWHHH20}.
Moreover, we also cannot use existing benchmarks of AFs\footnote{See, e.g., \url{http://argumentationcompetition.org/}} (and only generate the votes), as these involve very large AFs. This is beyond our scope, as we want to focus on more moderately-sized discussions. Hence, we generate both the AFs and the ballots.



To generate the frameworks, we use \texttt{AFBenchGen2}~\citep{CeruttiGV2016benchgen}, a tool for generating AFs implementing various graph-generation methods. We use the Barabási-Albert sampling algorithm~\citep{barabasi1999emergence}, which produces scale-free graphs, i.e., graphs whose degree distribution follows a power law. Note that Barabási-Albert graphs are, by default, acyclic; this is unsuitable for AFs. Hence, with \texttt{AFBenchGen2}, users can specify the probability of a node being in a cycle, and the tool will add extra attacks accordingly. We refer to this probability as $p_c$, and to the number of arguments as $n_a$.

In our experiments, we are interested in AFs with a certain number of preferred extensions: on the one hand, we need enough preferred extensions to make the problem of choosing an outcome nontrivial. On the other hand, we need to limit the number of extensions for performance reasons. As there is no default option to generate AFs with a given number of extensions, we generate a large number of AFs (with varying arguments) and then filter out those with a desirable number of extensions.

In particular, we generate AFs with $n_a\in\{10,\ \ldots,\ 50\}$ and $p_c\in\{0.25,\allowbreak 0.5,\allowbreak 0.75\}$. For each combination of these values, we generate $20$ AFs. From this set we then select a certain number of AFs (depending on the experiment) with the desired number of preferred extensions (also depending on the experiment).

To generate votes, we use a mixed Mallows model \citep{mallows1957non, szufa2020drawing}.
Here, there are some central votes (from now on, \emph{ground truths}), of which the ballots are noisy signals. We might think of a ground truth as some rational viewpoint a voter might lean towards. Intuitively, each voter's preference will be a (possibly irrational) approximation of such viewpoints.

Formally, for a ground truth $S$ and distance $d$, the probability of sampling a vote $S^\prime$ is proportional to $\Phi^{d(S,S^\prime)}$, where $\Phi\in[0, 1]$ is called the \emph{dispersion}. Observe that, if $\Phi=0$, all votes have zero distance from the ground truth, while if $\Phi=1$, all votes are equally likely. \citet{howtosample} give a polynomial-time algorithm for sampling approval ballots with this model.

In our scenario, given an AF, we sample the ground truths uniformly at random from its preferred extensions.
Then, for each ground truth, we sample a fixed number of ballots centered around it. This models a scenario where voters are divided into different ``parties'' centered around different rational viewpoints. Given a ground truth $S$ and a vote $S^\prime$, we define the distance $d(S,S^\prime)=|S^\prime|-|S^\prime\cap S|$, i.e., the number of arguments in $S^\prime$ that are not in $S$. Observe that our distance measure reflects our notion of representability: if a voter approves a subset of the ground truth, the distance is zero (as the ground truth fully ``covers'' her approved arguments).

For our experiments, we select $50$ frameworks with $10$ preferred extensions each. 
Then, for each AF $F$, we sample $5$ ground truths from $\pref(F)$, and for each ground truth, we sample $20$ ballots centered around it. We repeat this generation for $\Phi\in\{0,0.1,\ldots,0.9,1\}$, and for each AF and $\Phi$, we generate $5$ ABSAFs, for a total of $50\cdot 11\cdot 5=2750$ ABSAFs.

We ran all experiments for both the base definition of representation (cf.\ Definition~\ref{def:representation}) and core-representation (cf.\ Definition~\ref{def:core-representation}). Here, we show results for the former; the others are comparable, and are included in the appendix.

In our first experiment, given a rule, we compute the score of its selected outcome over all generated AFs and for $k\in\{1,\ldots,7\}$. For each value of $k$, we average the results over each value of the dispersion parameter (i.e., we plot the dispersion parameter $\Phi$ on the $x$-axis, and the function score on the $y$-axis for each $k$). Since there are $5$ ground truths, our expectation (for low dispersion) is that, for $k=5$, our method should be able to represent the voters well. By considering $k<5$ we investigate how robust our approach is (i.e., whether the score dips) when selecting outcomes smaller than the number of ground truths. Conversely, by looking at $k\in\{6,7\}$, we check whether we gain any performance by outcomes larger than the ground truth. The results of this experiment are displayed in Figure~\ref{fig:varyingExp}. We ran this for Utilitarian, Harmonic and Egalitarian. Here we show results for Utilitarian (the others are comparable, and are included in the appendix).

Next, we run the Utilitarian, Egalitarian, Harmonic and \maxCov{} rules on our data with $k=3$. We look at the following metrics, all of which we average over the different dispersion parameters:
\begin{itemize}
  \item The average and minimal representation score (over all voters) given by the selected outcome;
  \item The ``recovery score'' of an outcome computed as $\nicefrac{r(\outcome)}{k}$, where $r(\outcome)$ is the number of ground truths included in $\outcome$. Intuitively, this captures how well a rule can recover the ground truths around which the voters' preferences center;
  \item The approximation ratio of the greedy variant of each rule.
\end{itemize}

We report the results in Figures~\ref{fig:avgExp},~\ref{fig:minExp} and~\ref{fig:recExp}. For the greedy approximation experiment, we do not show the full data here (it can be found in the appendix), but only the minimum of the (average) approximation ratio (across all dispersion parameters): for Utilitarian, Harmonic, Egalitarian and \maxCov{} we have a minimum value of $99.26\%$, $96.61\%$, $89.41\%$ and $98.75\%$, respectively. Observe that we ran all the above also for $k\in\{2,5\}$; the results are comparable to what we show, and are included in the appendix.

\begin{table}[t]
	\centering
	\caption{Performance experiment.}
	\begin{tabular}{c|cccc}
		\toprule
		& \multicolumn{2}{c}{\textbf{non-greedy}} & \multicolumn{2}{c}{\textbf{greedy}} \\ \cmidrule(lr){2-3}\cmidrule(lr){4-5}
		\emph{\# pref. ext.}     & \emph{successes} & \emph{time} &  \emph{successes}  & \emph{time} \\ \midrule
		$\{8, \ldots, 12\}$  & 30/30     & 0.03          & 30/30      & 0.01          \\
		$\{13, \ldots, 17\}$ & 30/30     & 0.73          & 30/30      & 0.02          \\
		$\{18, \ldots, 22\}$ & 30/30     & 5.63          & 30/30      & 0.03          \\
		$\{23, \ldots, 27\}$ & 1/30      & 17.24         & 30/30      & 0.06          \\
		$\{28, \ldots, 32\}$ & 0/30      & -             & 30/30      & 0.12		  \\ \bottomrule
	\end{tabular}
	\label{tab:performance}
\end{table}

Finally, we compare the runtime and scalability of the greedy and non-greedy approaches. To do so, we select AFs with increasing values for $m = |\pref(\af)|$, and compare the performance of the two approaches. We start by selecting $30$ AFs with between $8$ and $12$ preferred extensions each. At the next iteration, we select $30$ AFs that have between $13$ and $17$ preferred extensions, and so on, increasing the minimum and maximum number of preferred extensions allowed by $5$ at every iteration. To sample the voters, for each AF $F$, we choose uniformly at random the dispersion value to use from the set $\Phi\in\{0,0.1,\ldots,0.9,1\}$, and sample approximately $100$ voters centered around $\lfloor \nicefrac{m}{4}\rfloor$ ground truths (chosen uniformly at random from $\pref(F)$).\footnote{Note that the exact number of voters depends on the value of $\lfloor \nicefrac{m}{4}\rfloor$, and in our setup, was never less than $96$. Regardless, the number of voters has a limited impact on the runtime.}

We run the Utilitarian and greedy-Utilitarian rules on each input, looking for outcomes of size $k=\lfloor \nicefrac{m}{4}\rfloor$ and setting a timeout of $45$ seconds. We increase the number of preferred extensions $m$ until no instance out of the $30$ is solved within the time limit. For the instances that terminate before timeout, we report the average runtime. The results are shown in Table~\ref{tab:performance}. We also ran the greedy algorithm on larger AFs, to assess its limits. On a framework with $256$ extensions, we could find an outcome of size $k=64$ in 22 seconds. With $512$ extensions and $k=128$, we exceed the 45-seconds timeout.

\subsubsection*{Discussion} Let us now analyze the results. Looking at Figure~\ref{fig:varyingExp} we see the following:
(1)~As $\Phi$ grows, performance drops at the same rate for all $k$; 
(2)~Our method seems quite robust: if we select outcomes slightly smaller than the true number of ground truths (i.e., $k\in\{3,4\}$), we do not observe substantial performance drops;
(3)~Conversely, if we select outcomes larger than the number of ground truths ($k>5$), we do not gain significant advantages.

Next, looking at Figures~\ref{fig:avgExp},~\ref{fig:minExp} and~\ref{fig:recExp}, we note that all rules outperform the $k$-random-extensions baseline. This is promising, although this baseline achieves quite high performance in some metrics (Figure~\ref{fig:avgExp}). Next, although the ``all preferred extensions'' baseline is (as expected) always best performing, the performance of ``all ground truths'' decreases, being even outperformed by some rules. This is reasonable: as $\Phi$ increases, votes become less similar to the ground truths. Moreover, we can see from Figures~\ref{fig:avgExp} and~\ref{fig:minExp} that Harmonic seems to be a good compromise between Utilitarian and Egalitarian. In particular, Harmonic performs almost as well as Egalitarian in the minimum representation metric. Furthermore, we can see that \maxCov{} performs poorly for minimum representation, and its performance decreases starkly as $\Phi$ increases for average representation. Finally, in Figure~\ref{fig:recExp}, Utilitarian and \maxCov{} seem the best performing rules; however, Harmonic performs comparably.

In general, despite the axiomatic advantages of \maxCov{} (Theorem~\ref{thm:maxCov-JR}),
in our setup, Utilitarian seems to perform better w.r.t. most metrics.\footnote{Observe that
we also tested for an additional metric, namely, the percentage of voters that are 1-represented
by the selected outcome (the plots can be found in the appendix). Clearly, \maxCov{} was the best performing rule, but Utilitarian was
essentially equally good. Therefore, even from the angle of perfect representation, in practice,
Utilitarian seems comparable to \maxCov{}.} Depending on the application scenario, we can recommend
the Utilitarian rule, with the Harmonic rule being a sensible alternative,
compromising between efficiency and fairness.

Finally, we notice that the greedy versions of our rules offer a good approximation ratio, always well above $90\%$ (with the exception of Egalitarian).
Moreover, from Table~\ref{tab:performance} we can see, as hinted by our theoretical findings (Proposition \ref{prop:greedy-normal-complexity}), that the greedy algorithm drastically improves over the performance of the non-greedy one, and seems to scale quite well. These findings combined suggest that the greedy algorithms are a scalable and well-performing method to apply our framework in real-world scenarios.

\section{Conclusion}

\paragraph{Summary} We presented a new framework for representing different
viewpoints in online discussions by combining approval voting
and abstract argumentation.
In this framework, citizens can both propose arguments and vote on
them. We then use argumentation theory to find the maximal, consistent 
and defendable points of view. For smaller instances,
we can then efficiently find a set that represents the \emph{defendable cores}
of all voter's ballots. For larger instances, we propose different methods for picking 
a small set of most representative points of view
and compared them axiomatically and in simulations.
Axiomatically, \maxCov{} showed the best behavior and can be 
recommended when justified representation is desired.
If justified representation is not necessary,
Utilitarian and Harmonic can be recommended according to our experiments, 
with the former being more efficient and the latter being fairer.

\paragraph{Formalizing Discussions} In this paper, we assume discussions to be already formalized as AFs.
In practice, when using real-world data, this formalization is a crucial step. 
In Example~\ref{example:absaf}, we manually created the AF representing the canadian election reform discussion. 
However, 
while manual annotation is feasible for small discussions, a fair moderation is important, as to avoid any subjective biases.
The formalization of larger debates, instead, may require argument mining techniques that can convert (often previously annotated) natural language text into AFs.
While such methods are being worked on~\citep{PalauM09a,CabrioV12,GoffredoCVHS23}, there are still challenges to be overcome~\citep{BaumannWHHH20}.

\paragraph{Future Work}
Our framework could be extended by also allowing disapprovals,
which are commonly seen in practice. Moreover, our general approach is independent 
of the choice of abstract argumentation for identifying consistent sets 
of comments. 
Thus, we plan to investigate the effect of using other mechanisms instead.

\section*{Acknowledgments}

Michael Bernreiter was supported by the Austrian Science Fund (FWF) through project P32830.
Jan Maly was supported by the FWF through project J4581. 
Oliviero Nardi was supported by the European Union's Horizon 2020 research and innovation programme under grant agreement number \includegraphics[height=\fontcharht\font`\B]{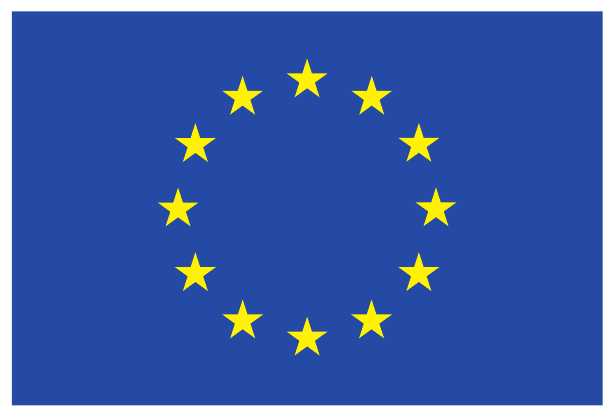}101034440, and by the Vienna Science and Technology Fund (WWTF) through project ICT19-065. 
Stefan Woltran was supported by the FWF through project P32830 and by the WWTF through project ICT19-065. 

\bibliographystyle{apalike}
\bibliography{references}


\clearpage

\appendix

\section{Omitted Plots}

\begin{table}[h]
	\centering
	\caption{Greedy approximation experiment ($k=2$, regular representation).}
	\begin{tabular}{c|cc}
		\toprule 
		\emph{function}     & \emph{min. approx.} & \emph{max. approx.} \\\midrule
		Utilitarian & 99.20\%      & 99.78\% \\
		Harmonic & 95.31\%      & 98.72\% \\
		Egalitarian & 88.42\%      & 96.06\% \\
		\maxCov & 99.09\%      & 99.93\% \\ \bottomrule
	\end{tabular}
\end{table}

\begin{table}[h]
	\centering
	\caption{Greedy approximation experiment ($k=2$, defendable core).}
	\begin{tabular}{c|cc}
		\toprule
		\emph{function}     & \emph{min. approx.} & \emph{max. approx.} \\\midrule
		Utilitarian & 99.20\%      & 99.80\% \\
		Harmonic & 95.31\%      & 98.51\% \\
		Egalitarian & 88.42\%      & 96.12\% \\
		\maxCov & 99.09\%      & 99.50\% \\ \bottomrule
	\end{tabular}
\end{table}

\begin{table}[h]
	\centering
	\caption{Greedy approximation experiment ($k=3$, regular representation).}
	\begin{tabular}{c|cc}
		\toprule
		\emph{function}     & \emph{min. approx.} & \emph{max. approx.} \\\midrule
		Utilitarian & 99.26\%      & 99.77\% \\
		Harmonic & 96.61\%      & 99.28\% \\
		Egalitarian & 89.41\%      & 97.73\% \\
		\maxCov{} & 98.75\%      & 99.84\% \\ \bottomrule
	\end{tabular}
	\label{tab:greedy}
\end{table}

\begin{table}[h]
	\centering
	\caption{Greedy approximation experiment ($k=3$, defendable core).}
	\begin{tabular}{c|cc}
		\toprule
		\emph{function}     & \emph{min. approx.} & \emph{max. approx.} \\\midrule
		Utilitarian & 99.26\%      & 99.79\% \\
		Harmonic & 96.61\%      & 98.73\% \\
		Egalitarian & 89.41\%      & 95.10\% \\
		\maxCov & 98.73\%      & 99.31\% \\ \bottomrule
	\end{tabular}
\end{table}

\begin{table}[h]
	\centering
	\caption{Greedy approximation experiment ($k=5$, regular representation).}
	\begin{tabular}{c|cc}
		\toprule
		\emph{function}     & \emph{min. approx.} & \emph{max. approx.} \\\midrule
		Utilitarian & 99.74\%      & 99.90\% \\
		Harmonic & 94.20\%      & 99.79\% \\
		Egalitarian & 73.96\%      & 97.89\% \\
		\maxCov & 99.43\%      & 99.97\% \\ \bottomrule
	\end{tabular}
\end{table}

\begin{table}[h]
	\centering
	\caption{Greedy approximation experiment ($k=5$, defendable core).}
	\begin{tabular}{c|cc}
		\toprule
		\emph{function}     & \emph{min. approx.} & \emph{max. approx.} \\ \midrule
		Utilitarian & 99.74\%      & 99.89\% \\
		Harmonic & 94.20\%      & 98.99\% \\
		Egalitarian & 73.96\%      & 92.26\% \\
		\maxCov & 99.33\%      & 99.63\% \\ \bottomrule
	\end{tabular}
\end{table}

\begin{figure*}
     \centering
     \begin{subfigure}[b]{0.45\textwidth}
         \centering
         \includegraphics[width=\textwidth]{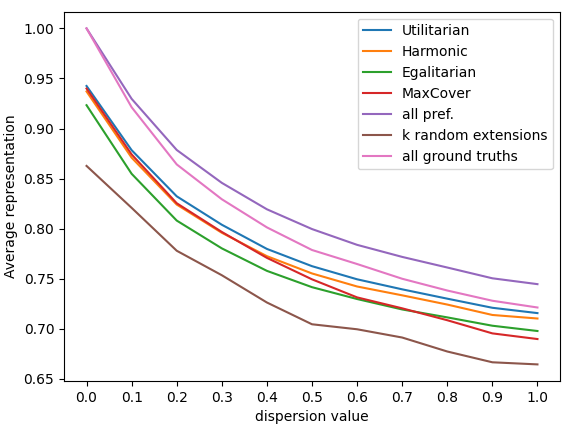}
         \caption{Average representation.}
     \end{subfigure}
      \hfill
     \begin{subfigure}[b]{0.45\textwidth}
         \centering
         \includegraphics[width=\textwidth]{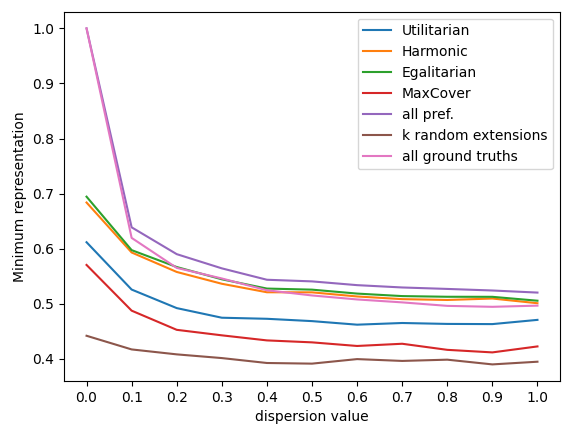}
         \caption{Minimal representation.}
     \end{subfigure}
     \begin{subfigure}[b]{0.45\textwidth}
         \centering
         \includegraphics[width=\textwidth]{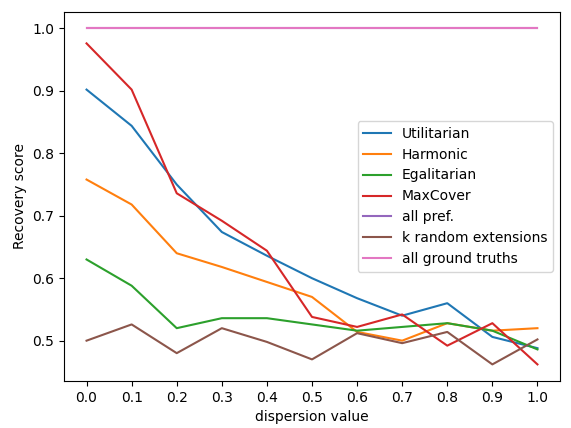}
         \caption{Recovery score.}
     \end{subfigure}
      \caption{Results of the experiments with $k=2$ (Regular representation).}
\end{figure*}

\begin{figure*}
     \centering
     \begin{subfigure}[b]{0.45\textwidth}
         \centering
         \includegraphics[width=\textwidth]{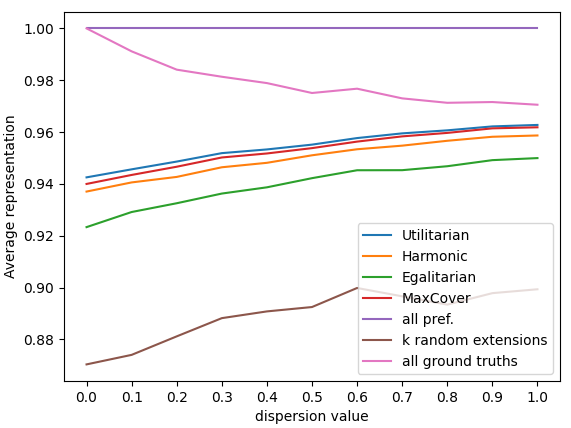}
         \caption{Average representation.}
     \end{subfigure}
      \hfill
     \begin{subfigure}[b]{0.45\textwidth}
         \centering
         \includegraphics[width=\textwidth]{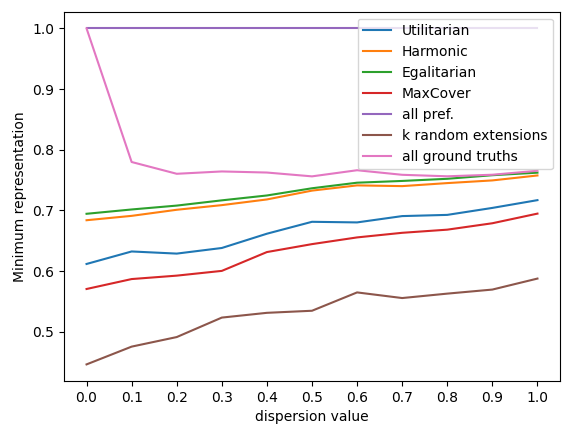}
         \caption{Minimal representation.}
     \end{subfigure}
     \begin{subfigure}[b]{0.45\textwidth}
         \centering
         \includegraphics[width=\textwidth]{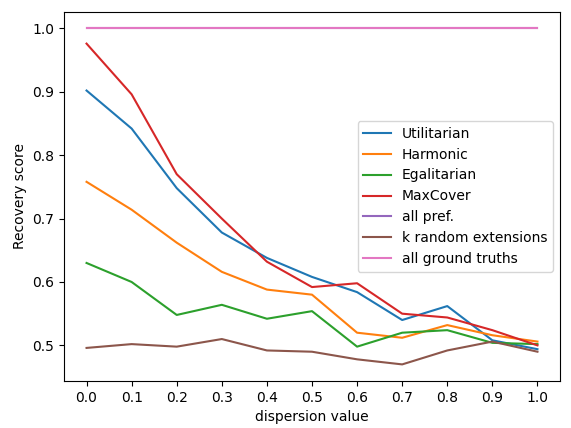}
         \caption{Recovery score.}
     \end{subfigure}
      \caption{Results of the experiments with $k=2$ (Defendable core).}
\end{figure*}

\begin{figure*}
     \centering
     \begin{subfigure}[b]{0.45\textwidth}
         \centering
         \includegraphics[width=\textwidth]{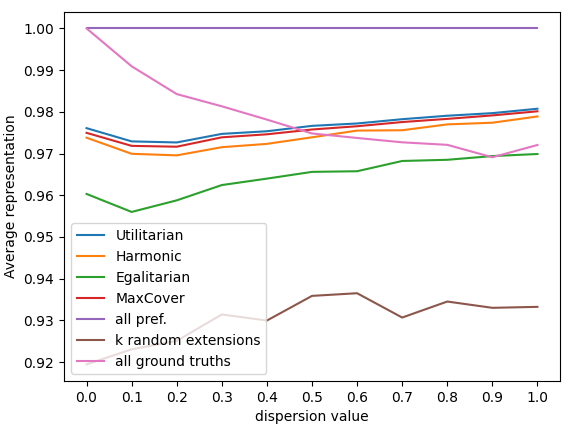}
         \caption{Average representation.}
     \end{subfigure}
     \hfill
     \begin{subfigure}[b]{0.45\textwidth}
         \centering
         \includegraphics[width=\textwidth]{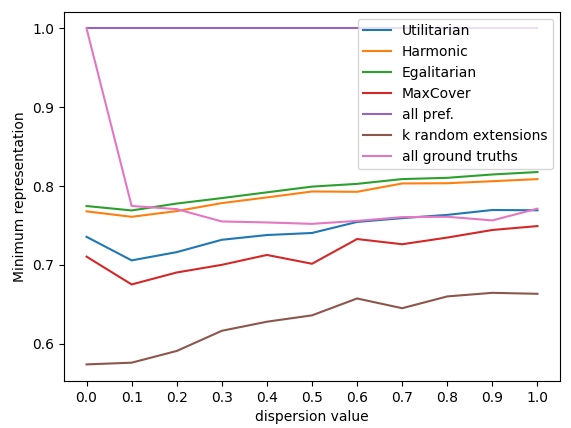}
         \caption{Minimal representation.}
     \end{subfigure}
     \begin{subfigure}[b]{0.45\textwidth}
         \centering
         \includegraphics[width=\textwidth]{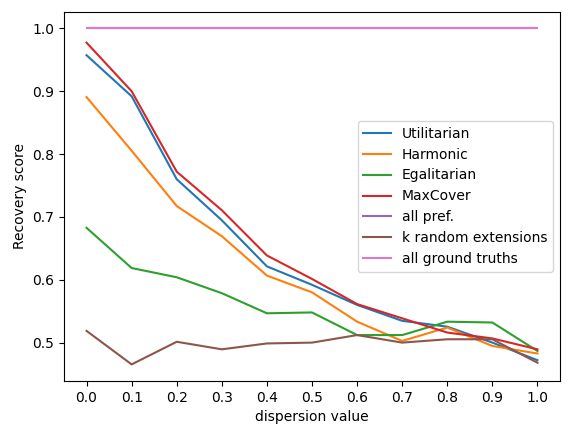}
         \caption{Recovery score.}
     \end{subfigure}
      \caption{Results of the experiments with $k=3$ (Defendable core).}
\end{figure*}

\begin{figure*}
     \centering
     \begin{subfigure}[b]{0.45\textwidth}
         \centering
         \includegraphics[width=\textwidth]{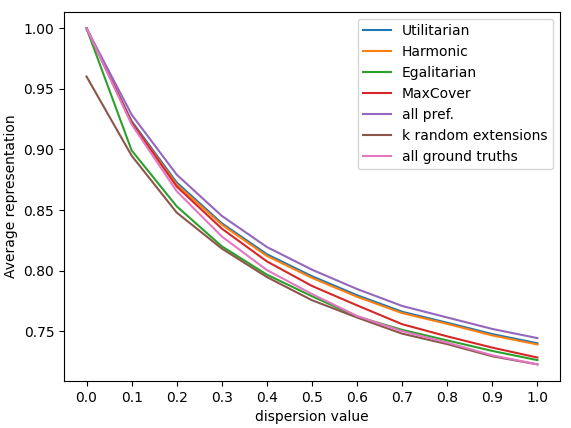}
         \caption{Average representation.}
     \end{subfigure}
     \hfill
     \begin{subfigure}[b]{0.45\textwidth}
         \centering
         \includegraphics[width=\textwidth]{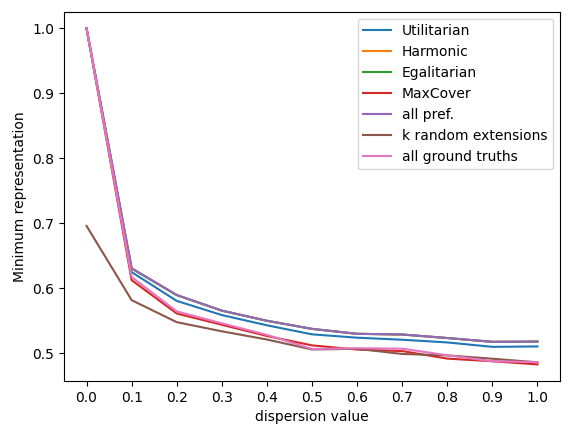}
         \caption{Minimal representation.}
     \end{subfigure}
     \begin{subfigure}[b]{0.45\textwidth}
         \centering
         \includegraphics[width=\textwidth]{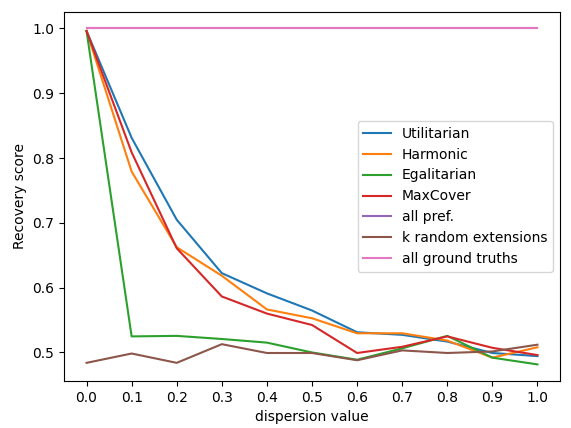}
         \caption{Recovery score.}
     \end{subfigure}
      \caption{Results of the experiments with $k=5$ (Regular representation).}
\end{figure*}

\begin{figure*}
     \centering
     \begin{subfigure}[b]{0.45\textwidth}
         \centering
         \includegraphics[width=\textwidth]{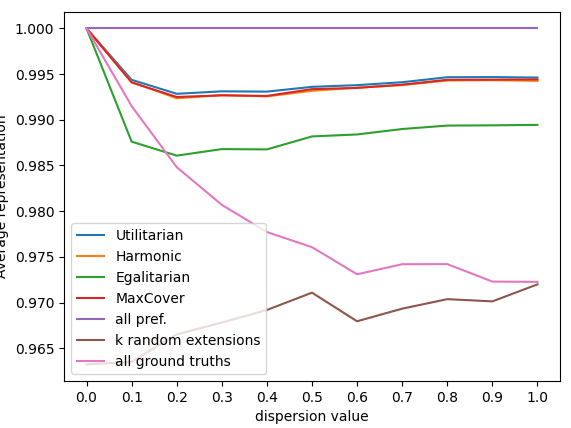}
         \caption{Average representation.}
     \end{subfigure}
     \hfill
     \begin{subfigure}[b]{0.45\textwidth}
         \centering
         \includegraphics[width=\textwidth]{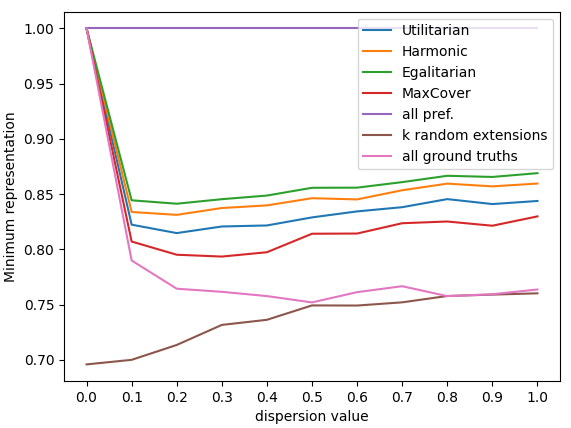}
         \caption{Minimal representation.}
     \end{subfigure}
     \begin{subfigure}[b]{0.45\textwidth}
         \centering
         \includegraphics[width=\textwidth]{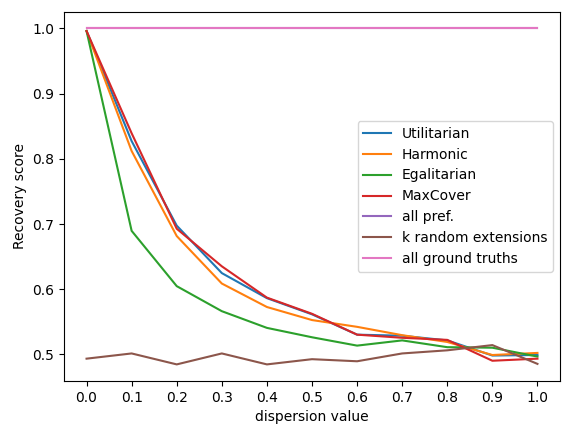}
         \caption{Recovery score.}
     \end{subfigure}
      \caption{Results of the experiments with $k=5$ (Defendable core).}
\end{figure*}

\begin{figure*}
     \centering
     \begin{subfigure}[b]{0.45\textwidth}
         \centering
         \includegraphics[width=\textwidth]{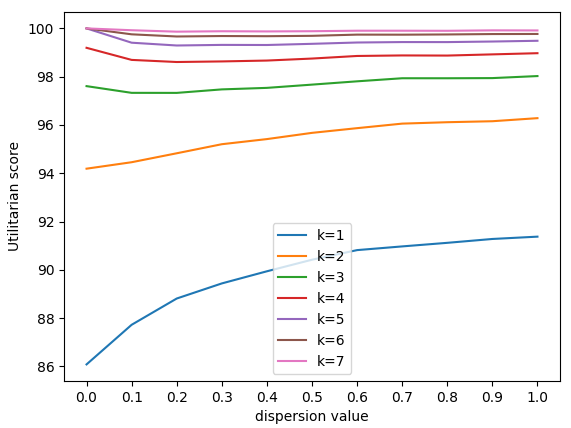}
         \caption{Utilitarian, core-representation.}
     \end{subfigure}\\
     \begin{subfigure}[b]{0.45\textwidth}
         \centering
         \includegraphics[width=\textwidth]{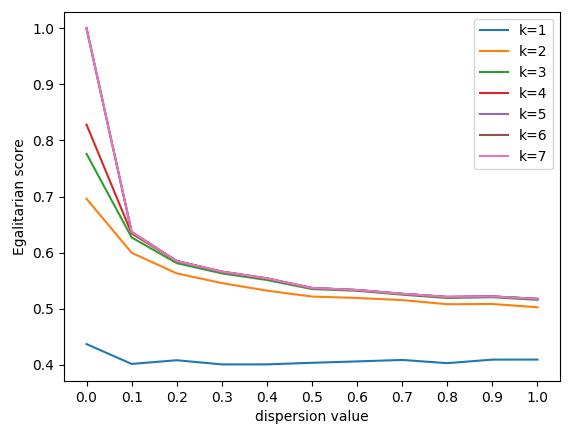}
         \caption{Egalitarian, regular representation.}
     \end{subfigure}
     \hfill
     \begin{subfigure}[b]{0.45\textwidth}
         \centering
         \includegraphics[width=\textwidth]{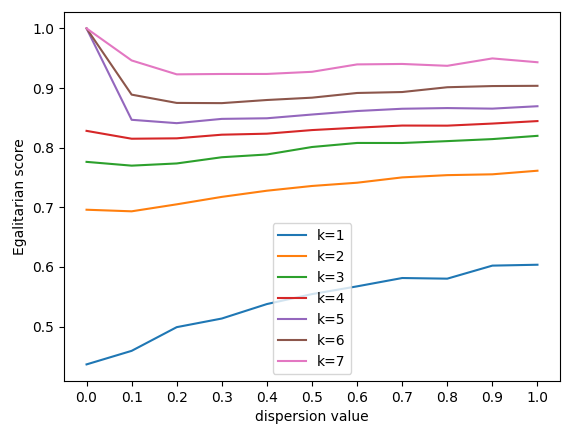}
         \caption{Egalitarian, regular representation.}
     \end{subfigure}
     \begin{subfigure}[b]{0.45\textwidth}
         \centering
         \includegraphics[width=\textwidth]{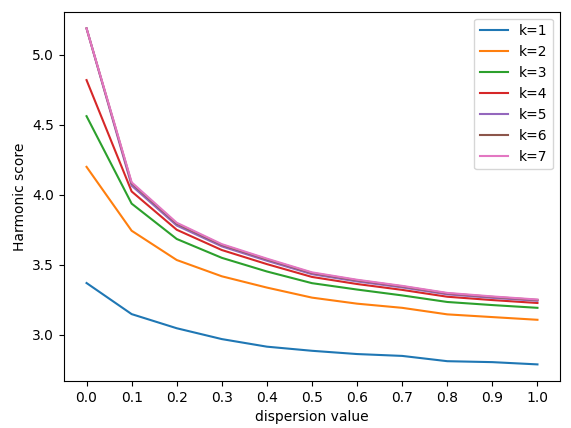}
         \caption{Harmonic, core-representation.}
     \end{subfigure}
     \hfill
     \begin{subfigure}[b]{0.45\textwidth}
         \centering
         \includegraphics[width=\textwidth]{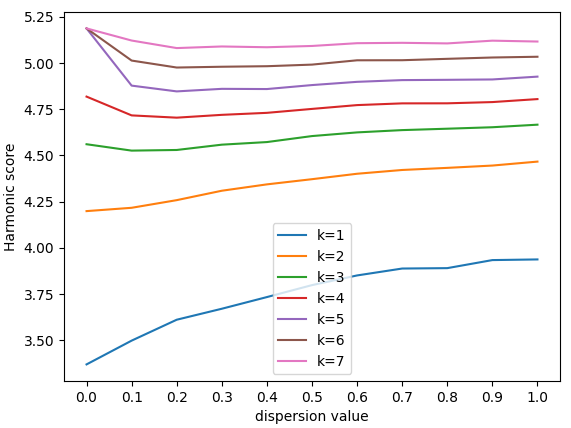}
         \caption{Harmonic, core-representation.}
     \end{subfigure}
      \caption{Results of the experiments with variable $k$.}
\end{figure*}

\begin{figure*}
     \centering
     \begin{subfigure}[b]{0.45\textwidth}
         \centering
         \includegraphics[width=\textwidth]{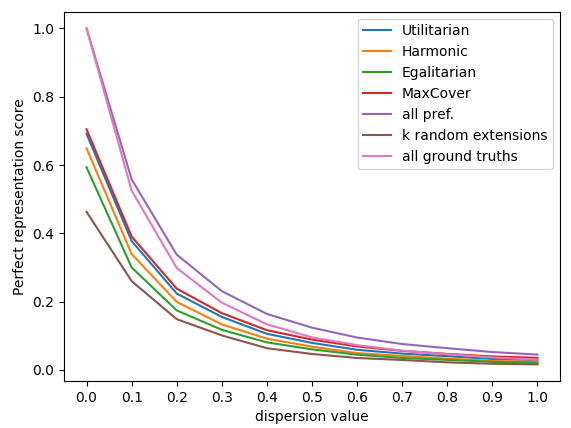}
         \caption{$k=2$, regular representation.}
     \end{subfigure}
     \hfill
     \begin{subfigure}[b]{0.45\textwidth}
         \centering
         \includegraphics[width=\textwidth]{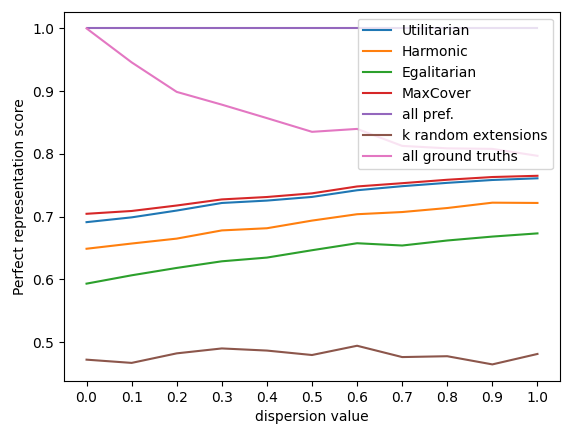}
         \caption{$k=2$, core-representation.}
     \end{subfigure}
     \begin{subfigure}[b]{0.45\textwidth}
         \centering
         \includegraphics[width=\textwidth]{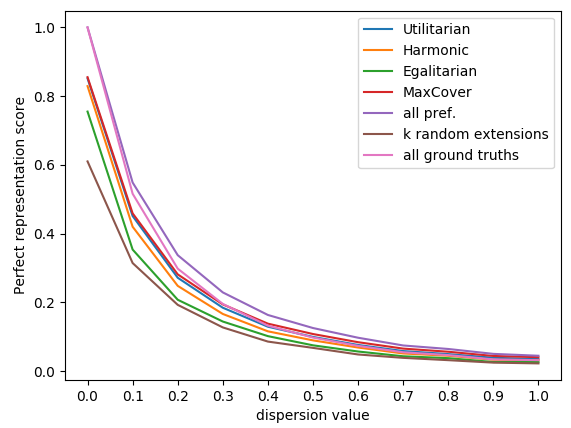}
         \caption{$k=3$, regular representation.}
     \end{subfigure}
     \hfill
     \begin{subfigure}[b]{0.45\textwidth}
         \centering
         \includegraphics[width=\textwidth]{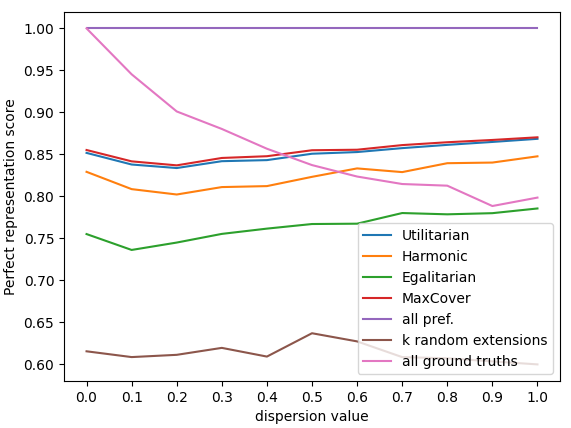}
         \caption{$k=3$, core-representation.}
     \end{subfigure}
     \begin{subfigure}[b]{0.45\textwidth}
         \centering
         \includegraphics[width=\textwidth]{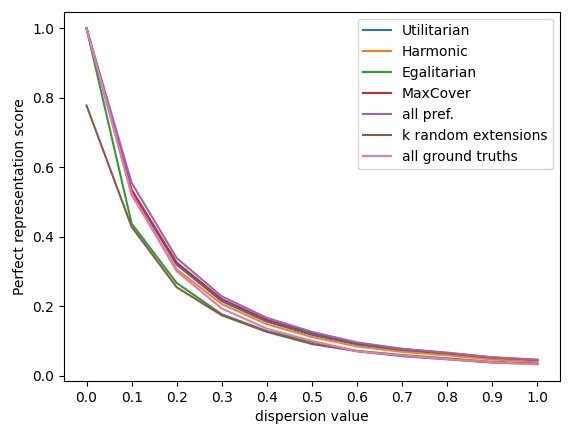}
         \caption{$k=5$, regular representation.}
     \end{subfigure}
     \hfill
     \begin{subfigure}[b]{0.45\textwidth}
         \centering
         \includegraphics[width=\textwidth]{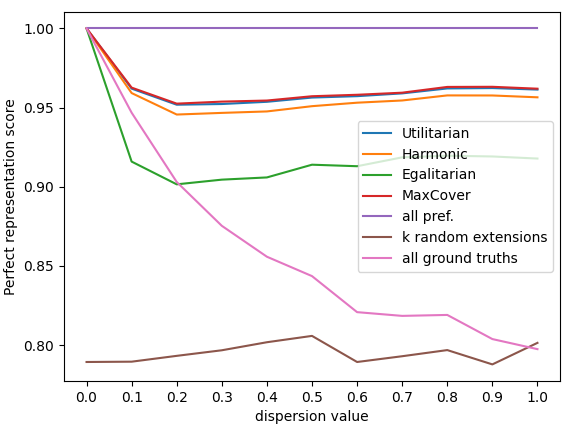}
         \caption{$k=5$, core-representation.}
     \end{subfigure}
      \caption{Results of the ``perfect representation score'' experiments.}
\end{figure*}


\end{document}